\newtheorem{theorem}{Theorem}
\newtheorem{proposition}{Proposition}
\newtheorem{lemma}{Lemma}
\newtheorem{corollary}{Corollary}
\newtheorem{example}{Example}
\newtheorem{remark}{Remark}
\newtheorem{definition}{Definition}
\begin{document}
\title{Error Correction in Polynomial Remainder Codes with Non-Pairwise Coprime Moduli and Robust Chinese Remainder Theorem for Polynomials}

\author{Li Xiao \,\,\,\, and  \,\,\,\,  Xiang-Gen Xia

\thanks{The authors are with Department of
Electrical and Computer Engineering,
University of Delaware, Newark, DE 19716, U.S.A. (e-mail:
\{lixiao, xxia\}@ee.udel.edu). Their work was supported in part
by the Air Force Office of Scientific Research (AFOSR) under
Grant FA9550-12-1-0055.}
}
\maketitle

\begin{abstract}
This paper investigates polynomial remainder codes with non-pairwise coprime moduli. We first consider a robust reconstruction problem for polynomials from erroneous residues when the degrees of all residue errors are assumed small, namely robust Chinese Remainder Theorem (CRT) for polynomials. It basically says that a polynomial can be reconstructed from erroneous residues such that the degree of the reconstruction error is upper bounded by $\tau$ whenever the degrees of all residue errors are upper bounded by $\tau$, where a sufficient condition for $\tau$ and a reconstruction algorithm are obtained. By releasing the constraint that all residue errors have small degrees, another robust reconstruction is then presented when there are multiple unrestricted errors and an arbitrary number of errors with small degrees in the residues. By making full use of redundancy in moduli, we obtain a stronger residue error correction capability in the sense that apart from the number of errors that can be corrected in the previous existing result, some errors with small degrees can be also corrected in the residues. With this newly obtained result, improvements in uncorrected error probability and burst error correction capability in a data transmission are illustrated.
\end{abstract}

\begin{IEEEkeywords}
Burst error correction, error correction codes, polynomial remainder codes, residue codes, robust Chinese Remainder Theorem (CRT).
\end{IEEEkeywords}

\IEEEpeerreviewmaketitle

\section{Introduction}

\IEEEPARstart{T}{he} Chinese Remainder Theorem (CRT) can uniquely determine a large integer from its remainders with respect to several moduli if the large integer is less than the least common multiple (lcm) of all the moduli \cite{CRT1,CRT2}. Based on the CRT for integers, residue codes with pairwise or non-pairwise coprime moduli are independently constructed, where codewords are residue vectors of integers in a certain range modulo the moduli. More specifically, a residue code with pairwise coprime moduli $m_1,\cdots,m_k,m_{k+1},\cdots,m_n$ consists of residue vectors of integers in the range $[0,\prod_{i=1}^{k}m_i)$, where the first $k$ moduli form a set of nonredundant moduli, and the last $n-k$ moduli form a set of redundant moduli used for residue error detection and correction. Over the past few decades, there has been a vast amount of research on residue error correction algorithms for such a class of codes. For more details, we refer the reader to \cite{kls1,kls2,vgoh,grs,rsk,e121,e122}. By removing the requirement that the moduli be pairwise coprime, a residue code with non-pairwise coprime moduli $m_1,m_2,\cdots,m_l$ consists of residue vectors of integers in the range $[0,\mbox{lcm}(m_1,m_2,\cdots,m_l))$. Compared with residue codes with pairwise coprime moduli, the residue error detection and correction algorithm for residue codes with non-pairwise coprime moduli is much simpler, and the price paid for that is an increase in redundancy. Moreover, residue codes with non-pairwise coprime moduli may be quite effective in providing a wild coverage of ``random" errors \cite{bm,bm1,bm2}. In order to perform reliably polynomial-type operations (e.g., cyclic convolution, correlation, DFT and FFT computations) with reduced complexity in digital signal processing systems, residue codes over polynomials (called polynomial remainder codes in this paper) with pairwise or non-pairwise coprime polynomial moduli have been investigated as well \cite{poly5,poly6,poly8,poly1,poly2,poly3,peb,ss}, where codewords are residue vectors of polynomials with degrees in a certain range modulo the moduli and all polynomials are defined over a Galois field. Polynomial remainder codes are a large class of codes that include BCH codes and Reed-Solomon codes as special cases \cite{poly4,poly44}. Due to two important features in residue codes: carry-free arithmetics and absence of ordered significance among the residues, residue error detection and correction technique in residue codes has various applications in, for example, fault-tolerant execution of arithmetic operations in digital processors and in general digital hardware implementations on computers \cite{mhe,why,eddc,peb,ss}, orthogonal frequency division multiplexing (OFDM) and  code division multiple access (CDMA) based communication systems \cite{lly1,lly2,lly3,lle,lle2,ass1,papr}, and secure distributed data storage for wireless networks \cite{dad1,dad2,dad3,dad4}.

In this paper, we focus on polynomial remainder codes with non-pairwise coprime moduli. Note that a coding theoretic framework for such a class of codes has been proposed in \cite{ss}, where the concepts of Hamming weight, Hamming distance, code distance in polynomial remainder codes are introduced. It is stated in \cite{ss} that a polynomial remainder code with non-pairwise coprime moduli and code distance $d$ can correct up to $\lfloor(d-1)/2\rfloor$ errors in the residues, and a fast residue error correction algorithm is also presented, where $\lfloor\star\rfloor$ is the floor function. This reconstruction from the error correction method is accurate but only a few of residues are allowed to have errors and most of residues have to be error-free. The goal of this paper is to study robust reconstruction and error correction when a few residues have arbitrary errors (called unrestricted errors) similar to \cite{ss} and some (or all) of the remaining residues have small errors (i.e., the degrees of errors are small). It is two-fold. One is to study  robust reconstruction and the other is to study error correction, i.e., accurate reconstruction, when residues have errors.

Considering instabilities of data processing in wireless sensor networks and signal processing systems, robust reconstructions based on the CRT for integers were recently studied in \cite{chessa,wenjie,xiao} with different approaches. In this paper, by following the method in \cite{xiao} together with the error correction algorithm for polynomial remainder codes in \cite{ss}, we first propose a robust reconstruction algorithm for polynomials from erroneous residues, called robust CRT for polynomials, i.e., a polynomial can be reconstructed from erroneous residues such that the degree of the reconstruction error is upper bounded by the robustness bound $\tau$ whenever the degrees of all residue errors are upper bounded by $\tau$, where a sufficient condition for $\tau$ for the robustness to hold is obtained.
Next, by releasing the constraint that the degrees of all residue errors have to be bounded by $\tau$, we propose another robust reconstruction algorithm when a combined occurrence of multiple unrestricted errors and an arbitrary number of errors with degrees upper bounded by $\lambda$ happens to the residues, where a sufficient condition for $\lambda$ is also presented in this paper. Note that a combined occurrence of a single unrestricted error and an arbitrary number of small errors in the residues was considered for the robust reconstruction based on the CRT for integers in \cite{chessa}, but its approach is hard to deal with the case of multiple unrestricted errors combined with small errors in the residues due to a considerable decoding complexity. A detailed comparison in terms of robust reconstruction between this paper and \cite{chessa,xiao} is pointed out later in this paper (see Remark \ref{houm}). One can see that the above reconstructions may not be accurate but robust to the residue errors in terms of degree and all the residues are allowed to have errors.

Finally, we consider the residue error correction in a polynomial remainder code with code distance $d$. Compared with the result in \cite{ss}, by making full use of the redundancy in moduli and newly proposed robust reconstruction method, we obtain a stronger residue error correction capability in the sense that apart from correcting up to $\lfloor(d-1)/2\rfloor$ residue errors, a polynomial remainder code with code distance $d$ can correct some additional residue errors with small degrees. With this newly obtained result, improvements in the performances of uncorrected error probability and burst error correction considered in a data transmission are illustrated.

The rest of the paper is organized as follows. In Section \ref{sec1}, we briefly introduce some fundamental knowledge in polynomials over a Galois field and coding theory of polynomial remainder codes with non-pairwise coprime moduli obtained in \cite{ss}. In Section \ref{sec2}, we propose robust CRT for polynomials. In Section \ref{sec4}, another robust reconstruction is considered when a combined occurrence of multiple unrestricted errors and an arbitrary number of errors with small degrees is in the residues. In Section \ref{sec3}, a stronger residue error correction capability in polynomial remainder codes with non-pairwise coprime moduli and its improvements in uncorrected error probability and burst error correction in a data transmission are presented. We conclude this paper in Section \ref{sec5}.

\section{Preliminaries}\label{sec1}

Let $F$ be a field and $F[x]$ denote the set of all polynomials with coefficients in $F$ and indeterminate $x$. The highest power of $x$ in a polynomial $f(x)$ is termed the degree of the polynomial, and denoted by $\mbox{deg}\left(f(x)\right)$. All the elements of $F$ can be expressed as polynomials of degree $0$ and are termed scalars. A polynomial of degree $n$ is called monic if the coefficient of $x^n$ is $1$. Denote by $\mbox{gcd}\left(f_1(x),f_2(x),\cdots,f_L(x)\right)$ the greatest common divisor (gcd) of a set of polynomials $f_i(x)$, i.e., the polynomial with the largest degree that divides all of the polynomials $f_i(x)$ for $1\leq i\leq L$. The least common multiple (lcm) of a set of polynomials $f_i(x)$, denoted by $\mbox{lcm}\left(f_1(x),f_2(x),\cdots,f_L(x)\right)$, is the polynomial with the smallest degree that is divisible by every polynomial $f_i(x)$ for $1\leq i\leq L$. For the uniqueness, $\mbox{gcd}(\cdot)$ and $\mbox{lcm}(\cdot)$ are both taken to be monic polynomials. Two polynomials are said to be coprime if their $\mbox{gcd}$ is $1$ or any nonzero scalar in $F$. A polynomial is said to be irreducible if it has only a scalar and itself as its factors. The residue of $f(x)$ modulo $g(x)$ is denoted as $\left[ f(x)\right]_{g(x)}$. Throughout the paper, all polynomials are defined over a field $F$, and $\lfloor\star\rfloor$ and $\lceil\star\rceil$ are well known as the floor and ceiling functions.

Let $m_1(x),m_2(x),\cdots,m_L(x)$ be $L$ non-pairwise coprime polynomial moduli, and $M(x)$ be the lcm of all the moduli, i.e., $M(x)=\mbox{lcm}\left(m_1(x),m_2(x),\cdots,m_L(x)\right)$.
For any polynomial $a(x)$ with $\mbox{deg}\left(a(x)\right)<\mbox{deg}\left(M(x)\right)$, it can be represented by its residue vector $\left(a_1(x),a_2(x),\cdots,a_L(x)\right)$, where $a_i(x)=\left[a(x)\right]_{m_i(x)}$, i.e.,
\begin{equation}\label{folding}
a(x)=k_i(x)m_i(x)+a_i(x)
\end{equation}
with $\mbox{deg}\left(a_i(x)\right)<\mbox{deg}\left(m_i(x)\right)$ and $k_i(x)\in F[x]$ for $1\leq i\leq L$. Here, we call such $k_i(x)$ in (\ref{folding}) the folding polynomials.
Equivalently, $a(x)$ can be computed from its residue vector via the CRT for polynomials \cite{CRT1,ss},
\begin{equation}\label{crtpoly}
a(x)=\left[\sum_{i=1}^{L}a_i(x)D_i(x)M_i(x)\right]_{M(x)},
\end{equation}
where $M_i(x)=\frac{M(x)}{\mu_i(x)}$, $D_i(x)$ is the multiplicative inverse of $M_i(x)$ modulo $\mu_i(x)$, if $\mu_i(x)\neq 1$, else $D_i(x)=0$, and $\{\mu_i(x)\}_{i=1}^{L}$ is a set of $L$ pairwise coprime monic polynomials such that $\prod_{i=1}^{L}\mu_i(x)=M(x)$ and $\mu_i(x)$ divides $m_i(x)$ for each $1\leq i\leq L$. Note that if $m_i(x)$ are pairwise coprime, we have $\mu_i(x)=m_i(x)$ for $1\leq i\leq L$, and then the above reconstruction reduces to the traditional CRT for polynomials.

As seen in the above, polynomials $a(x)$ with $\mbox{deg}\left(a(x)\right)<\mbox{deg}\left(M(x)\right)$ and their residue vectors are isomorphic. Furthermore, the isomorphism holds for the addition, substraction, and multiplication between two polynomials $a(x)$ and $b(x)$, both with degrees less than $\mbox{deg}\left(M(x)\right)$. First convert each polynomial to a residue vector as
\begin{equation}
a(x)\leftrightarrow\left(a_1(x),\cdots,a_L(x)\right)\mbox{ and }b(x)\leftrightarrow\left(b_1(x),\cdots,b_L(x)\right).
\end{equation}
Then, the residue representation of $c(x)=a(x)\pm b(x)$ or $d(x)=[a(x)b(x)]_{M(x)}$ is given by, respectively,
\begin{equation}\label{lin}
c(x)\leftrightarrow\left(a_1(x)\pm b_1(x),\cdots,a_L(x)\pm b_L(x)\right),
\end{equation}
\begin{equation}
d(x)\leftrightarrow\left([a_1(x)b_1(x)]_{m_1(x)},\cdots,[a_L(x)b_L(x)]_{m_L(x)}\right).
\end{equation}
Moreover, an important property in a polynomial remainder code with non-pairwise coprime moduli is that if $a(x)\equiv a_i(x)\mbox{ mod }m_i(x)$ and $a(x)\equiv a_j(x)\mbox{ mod }m_j(x)$, the following congruence holds \cite{CRT1,ss}:
\begin{equation}\label{consi}
a_i(x)\equiv a_j(x)\mbox{ mod }d_{ij}(x),
\end{equation}
where $d_{ij}(x)=\mbox{gcd}\left(m_i(x),m_j(x)\right)$. We call equation (\ref{consi}) a consistency check between residues $a_i(x)$ and $a_j(x)$. If (\ref{consi}) holds, $a_i(x)$ is said to be consistent with $a_j(x)$; otherwise, $a_i(x)$ and $a_j(x)$ appear in a failed consistency check. A residue vector $\left(a_1(x),a_2(x),\cdots,a_L(x)\right)$ is said to be a polynomial remainder codeword if it satisfies the consistency checks given by (\ref{consi}) for all pairs of residues in the vector. So, any polynomial $a(x)$ with $\mbox{deg}\left(a(x)\right)<\mbox{deg}\left(M(x)\right)$ is represented by a unique polynomial remainder codeword, i.e., its residue vector. Conversely, every polynomial remainder codeword is the representation of a unique polynomial with degree less than $\mbox{deg}\left(M(x)\right)$. We call the set of such codewords a polynomial remainder code with moduli $m_1(x),\cdots,m_L(x)$, which is linear according to (\ref{lin}).

If $t$ errors, $e_{i_1}(x),\cdots,e_{i_t}(x)$, in the residues have occurred in the transmission, then the received residue vector, denoted by $\left(\tilde{a}_1(x),\cdots,\tilde{a}_L(x)\right)$, is determined by
\begin{equation}\label{rr}
 \left(\tilde{a}_1(x),\cdots,\tilde{a}_L(x)\right)=\left(a_1(x),\cdots,a_L(x)\right)
 \end{equation}
 \begin{equation*}
 \hspace{2cm}+\left(0,\cdots,e_{i_1}(x),\cdots,e_{i_2}(x),\cdots,e_{i_t}(x),\cdots\right),
\end{equation*}
where $\mbox{deg}\left(e_{i_j}(x)\right)<\mbox{deg}\left(m_{i_j}(x)\right)$ for $1\leq j\leq t$, and the subscripts $i_1,\cdots,i_t$ are the corresponding positions of the residue errors $e_{i_1}(x),\cdots,e_{i_t}(x)$. In \cite{ss}, the capability of residue error correction in a polynomial remainder code with non-pairwise coprime moduli has been investigated, and a simple method for residue error correction has been proposed as well. Before briefly reviewing them, let us present some notations and terminologies in polynomial remainder codes with non-pairwise coprime moduli used in \cite{ss}. Hamming weight of a codeword is the number of nonzero residues in the codeword, Hamming distance between two codewords is defined as the Hamming weight of the difference of the two codewords, and code distance of a polynomial remainder code is the minimum of the Hamming distances between all pairs of different codewords. Due to its linearity (\ref{lin}), the code distance is actually equal to the smallest Hamming weight over all nonzero codewords. Similar to a conventional binary linear code, a polynomial remainder code with code distance $d$ can detect up to $d-1$ errors in the residues, and correct up to $\lfloor(d-1)/2\rfloor$ errors of arbitrary values in the residues. A test for the code distance of a polynomial remainder code with non-pairwise coprime moduli is presented in the following.

\begin{proposition}\cite{ss}\label{p1}
Let $m_i(x)$, $1\leq i\leq L$, be $L$ non-pairwise coprime polynomial moduli, and denote $M(x)=\mbox{lcm}\left(m_1(x),m_2(x),\cdots,m_L(x)\right)$. Write $M(x)$ in the form
\begin{equation}\label{fenie}
M(x)=p_1(x)^{t_1}p_2(x)^{t_2}\cdots p_K(x)^{t_K},
\end{equation}
where the polynomials $p_i(x)$ are pairwise coprime, monic and irreducible, and $t_i$ is a positive integer for all $1\leq i\leq K$. For each $1\leq i\leq K$, let $d_i$ represent the number of moduli that contain the factor $p_i(x)^{t_i}$. Then, the code distance of the polynomial remainder code with the set of moduli $\{m_1(x),m_2(x),\cdots,m_L(x)\}$ is $d=\min\{d_1,d_2,\cdots,d_K\}$.
\end{proposition}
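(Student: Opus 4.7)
The plan is to exploit the fact that, by the linearity noted in the excerpt, the code distance equals the minimum Hamming weight over nonzero codewords. For a codeword coming from $a(x)$ with $\deg(a(x))<\deg(M(x))$, a residue $a_i(x)=[a(x)]_{m_i(x)}$ vanishes precisely when $m_i(x)\mid a(x)$. So the weight of the codeword equals $L$ minus the number of moduli dividing $a(x)$, and the task reduces to bounding, over nonzero $a(x)$ of degree less than $\deg(M(x))$, the maximum number of moduli that can simultaneously divide $a(x)$. The prime–power factorization (\ref{fenie}) lets me translate divisibility by $m_i(x)$ into conditions on the exponents of the irreducibles $p_k(x)$ inside each $m_i(x)$.

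For the upper bound $d\le\min_j d_j$, I would, for each fixed $j$, exhibit the explicit witness $a(x)=M(x)/p_j(x)$. Its degree is $\deg(M(x))-\deg(p_j(x))<\deg(M(x))$, so it yields a legitimate codeword. Writing $m_i(x)=\prod_k p_k(x)^{s_{i,k}}$ with $0\le s_{i,k}\le t_k$, a short check shows $m_i(x)\mid a(x)$ iff $s_{i,j}\le t_j-1$, i.e., iff $m_i(x)$ does \emph{not} contain the factor $p_j(x)^{t_j}$. By the definition of $d_j$ in the statement, exactly $d_j$ of the moduli fail this condition, so the weight of this codeword is $d_j$. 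Varying $j$ and taking the minimum gives $d\le\min_j d_j$.

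For the matching lower bound, take any nonzero codeword of weight $w$, corresponding to some nonzero $a(x)$ with $\deg(a(x))<\deg(M(x))$, and let $T=\{i:m_i(x)\mid a(x)\}$, so $|T|=L-w$. Set $N(x)=\mathrm{lcm}\{m_i(x):i\in T\}$; since every $m_i(x)$ with $i\in T$ divides $a(x)$, so does $N(x)$, whence $\deg(N(x))\le\deg(a(x))<\deg(M(x))$. But $N(x)$ also divides $M(x)$, so $N(x)$ is a \emph{proper} monic divisor of $M(x)$, and in the factorization (\ref{fenie}) some exponent must drop strictly: there is a $j$ with the exponent of $p_j(x)$ in $N(x)$ strictly less than $t_j$. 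Consequently no $m_i(x)$ with $i\in T$ contains $p_j(x)^{t_j}$, so the $d_j$ moduli that do contain $p_j(x)^{t_j}$ all lie outside $T$, forcing $w\ge d_j\ge\min_k d_k$. Combined with the upper bound this yields $d=\min_k d_k$.

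The only delicate step I anticipate is the last paragraph's passage from ``$N(x)$ is a proper divisor of $M(x)$'' to ``some full prime power $p_j(x)^{t_j}$ is missing from every $m_i(x)$ with $i\in T$.'' This is where it matters that the $p_k(x)$ are \emph{pairwise coprime and irreducible}, so divisibility decouples across the factors and the exponent drop in $N(x)$ propagates to each individual modulus in $T$; the rest of the argument is a routine accounting of prime-power exponents.
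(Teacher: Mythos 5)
Your argument is correct and complete: reducing the code distance to the minimum weight of a nonzero codeword, using $a(x)=M(x)/p_j(x)$ as the weight-$d_j$ witness for the upper bound, and using the lcm $N(x)$ of the moduli dividing $a(x)$ (a proper divisor of $M(x)$, so some exponent of some $p_j(x)$ must drop, excluding all $d_j$ moduli containing $p_j(x)^{t_j}$ from the zero positions) for the lower bound. Note that the paper itself states this proposition without proof, citing \cite{ss}; your argument is essentially the standard one given there, so there is nothing further to compare.
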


Based on Proposition \ref{p1}, an explicit method of constructing a polynomial remainder code with code distance $d$ is also proposed in \cite{ss}. Let $M(x)$ be decomposed into the product of several smaller, pairwise coprime, and monic polynomials $p_i(x)^{t_i}$ as in the form (\ref{fenie}), $L$ represent the number of moduli in the code, and $d$ be a positive integer such that $1\leq d\leq L$. For each $p_i(x)^{t_i}$, assign $p_i(x)^{t_i}$ to $d_i$ different moduli, such that $d_i\geq d$, with the equality for at least one $i$. Set each modulus to be the product of all polynomials assigned to it. Then, the resulting polynomial remainder code will have the code distance $d$. In particular, repetition codes can be obtained in the above construction by setting $d_i=d=L$, i.e., all moduli are identical. Next, the polynomial remainder code defined in Proposition \ref{p1} enables fast error correction, as described in the following propositions.

\begin{proposition}\cite{ss}\label{p2}
In a polynomial remainder code with code distance $d$ defined in Proposition \ref{p1}, if only $t\leq\lfloor(d-1)/2\rfloor$ errors in the residues have occurred in the transmission, each erroneous residue will appear in at least $\lceil(d-1)/2\rceil+1$ failed consistency checks. In addition, each correct residue will appear in at most $\lfloor(d-1)/2\rfloor$ failed consistency checks.
\end{proposition}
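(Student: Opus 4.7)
The plan is to prove the two statements separately, treating the easier direction first and then reducing the harder one to a counting argument based on the factorization of $M(x)$ that underlies Proposition \ref{p1}.

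First I would dispose of the second assertion. If residue position $i$ is correct then $e_i(x)=0$, so the consistency check between $i$ and $j$ reads $e_j(x)\equiv 0 \bmod d_{ij}(x)$, which automatically holds whenever $j$ is also correct. Hence the checks involving $i$ can fail only at positions $j$ that are erroneous, and there are at most $t\leq \lfloor(d-1)/2\rfloor$ such positions. That immediately gives the upper bound.

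For the first assertion I would exploit the factorization $M(x)=\prod_{\ell=1}^{K}p_\ell(x)^{t_\ell}$ from Proposition \ref{p1}. Fix an erroneous position $i$, so $e_i(x)\neq 0$ with $\deg(e_i(x))<\deg(m_i(x))$. The modulus $m_i(x)$ is a product of a subset of the prime powers $p_\ell(x)^{t_\ell}$, and these powers are pairwise coprime. The key observation is: there must exist some $\ell$ with $p_\ell(x)^{t_\ell}\mid m_i(x)$ such that $p_\ell(x)^{t_\ell}\nmid e_i(x)$, because otherwise every prime-power factor of $m_i(x)$ would divide $e_i(x)$, forcing $m_i(x)\mid e_i(x)$ and contradicting $\deg(e_i(x))<\deg(m_i(x))$.

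With such an $\ell$ chosen, I would count the positions $j\neq i$ whose modulus $m_j(x)$ also contains the factor $p_\ell(x)^{t_\ell}$; by definition this gives $d_\ell-1\geq d-1$ candidates. Of these, at most $t-1$ can themselves be erroneous (since only $t$ errors occurred in total, one of which is at $i$), leaving at least $d_\ell-t\geq d-t$ correct positions $j$ among them. For each such correct $j$ we have $p_\ell(x)^{t_\ell}\mid d_{ij}(x)$ and $e_j(x)=0$, so the residue error in the check equals $e_i(x)$, which is not divisible by $p_\ell(x)^{t_\ell}$; hence the check $d_{ij}(x)$ fails. This yields at least
\[
d-t\;\geq\; d-\left\lfloor\tfrac{d-1}{2}\right\rfloor\;=\;\left\lceil\tfrac{d-1}{2}\right\rceil+1
\]
failed checks incident to position $i$, which is the desired bound.

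The only genuinely delicate step is the existence of an $\ell$ with $p_\ell(x)^{t_\ell}\nmid e_i(x)$; everything else is straightforward counting and a use of the construction in Proposition \ref{p1}. I do not need to analyze the checks between two erroneous positions, since even ignoring whatever checks those contribute, the correct neighbours alone already supply the required count.
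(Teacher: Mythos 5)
First, note that the paper itself gives no proof of Proposition~\ref{p2}: it is quoted verbatim from \cite{ss}. So the comparison here is against the standard argument rather than a proof in this manuscript. Your overall strategy is exactly that standard argument: the second assertion follows because a check between two correct residues always passes, and the first follows by locating a prime-power factor that ``witnesses'' the error $e_i(x)$ and counting the correct neighbours sharing that factor. The arithmetic $d-t\geq\lceil(d-1)/2\rceil+1$ and the counting via $d_\ell\geq d$ are both right.

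There is, however, one real gap in the witness step. You assert that $m_i(x)$ is a product of a subset of the \emph{full} prime powers $p_\ell(x)^{t_\ell}$ appearing in $M(x)$. That is true for codes built by the explicit construction following Proposition~\ref{p1}, but Proposition~\ref{p1} is stated as a \emph{test} for arbitrary non-pairwise coprime moduli, where $m_i(x)=\prod_\ell p_\ell(x)^{s_{i\ell}}$ with possibly $0<s_{i\ell}<t_\ell$. In that case your existence claim can fail: e.g.\ with $M(x)=p_1^3p_2^2$ and $m_i=p_1^2p_2^2$, the error $e_i=p_1p_2^2$ is nonzero of degree less than $\deg(m_i)$, yet the only full prime power dividing $m_i$, namely $p_2^{2}$, does divide $e_i$, so no $\ell$ of the kind you require exists. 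The repair is small: since $m_i(x)\nmid e_i(x)$, there is some $\ell$ with $s_{i\ell}\geq 1$ and $p_\ell(x)^{s_{i\ell}}\nmid e_i(x)$. Now count the $d_\ell\geq d$ positions $j$ whose modulus contains the full power $p_\ell(x)^{t_\ell}$; for each such $j\neq i$ one has $p_\ell(x)^{\min(s_{i\ell},t_\ell)}=p_\ell(x)^{s_{i\ell}}\mid d_{ij}(x)$, so every correct such $j$ yields a failed check against $i$, and the count is at least $d_\ell-t\geq d-t$ exactly as in your version (indeed one more when $s_{i\ell}<t_\ell$, since then $i$ itself is not among the $d_\ell$ positions). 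With this adjustment your proof is complete and coincides with the intended one.
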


\begin{proposition}\cite{ss}\label{p3}
Let moduli $m_i(x)$ for $1\leq i\leq L$, $M(x)$ and $d$ be defined in Proposition \ref{p1}. Then, the least common multiple of any $L-(d-1)$ moduli is equal to $M(x)$.
\end{proposition}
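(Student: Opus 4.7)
The plan is to prove this by a straightforward counting argument using the prime power factorization (\ref{fenie}) of $M(x)$ that is already built into the hypothesis. First I would note the easy direction: for any subset $S\subseteq\{1,\ldots,L\}$, each $m_i(x)$ with $i\in S$ divides $M(x)$, so $\mathrm{lcm}\{m_i(x):i\in S\}$ always divides $M(x)$. Hence it suffices to prove the reverse divisibility when $|S|=L-(d-1)$.

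Next, since the $p_j(x)^{t_j}$ in (\ref{fenie}) are pairwise coprime, in order to show $M(x)$ divides $\mathrm{lcm}\{m_i(x):i\in S\}$ it is enough to show that for each $j=1,\ldots,K$ the factor $p_j(x)^{t_j}$ divides some $m_i(x)$ with $i\in S$. This is where the definition of $d_j$ and the hypothesis $d=\min_j d_j$ come in.

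The core counting step is this: by definition $d_j$ moduli among $m_1(x),\ldots,m_L(x)$ are divisible by $p_j(x)^{t_j}$. The complement of $S$ in $\{1,\ldots,L\}$ has size $d-1$. Since $d-1<d\le d_j$, the complement cannot contain all $d_j$ indices of moduli divisible by $p_j(x)^{t_j}$. Therefore at least one $i\in S$ satisfies $p_j(x)^{t_j}\mid m_i(x)$, so $p_j(x)^{t_j}$ divides $\mathrm{lcm}\{m_i(x):i\in S\}$. Running $j$ over $1,\ldots,K$ and using pairwise coprimality gives $M(x)\mid\mathrm{lcm}\{m_i(x):i\in S\}$, which together with the easy direction yields the equality.

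I do not expect any real obstacle here; the statement is essentially a pigeonhole rephrasing of the definitions of $d_j$ and $d$, and the only subtlety is making sure one invokes pairwise coprimality of the $p_j(x)^{t_j}$ when assembling the divisibility conclusion for $M(x)$ from the conclusions for each prime power factor.
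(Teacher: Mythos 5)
Your argument is correct and complete: the trivial divisibility $\mathrm{lcm}\{m_i(x):i\in S\}\mid M(x)$ plus the pigeonhole step (the complement of $S$ has only $d-1<d\le d_j$ elements, so it cannot absorb all $d_j$ moduli containing $p_j(x)^{t_j}$) gives equality, and you correctly invoke pairwise coprimality of the factors $p_j(x)^{t_j}$ to assemble the divisibility by $M(x)$. Note that this paper states Proposition \ref{p3} as a cited result from \cite{ss} without reproducing a proof, so there is nothing to compare against here; your argument is the standard one and matches what the characterization of $d$ in Proposition \ref{p1} is designed to yield.
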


Based on Propositions \ref{p2}, \ref{p3}, a polynomial remainder code with code distance $d$ can correct up to $\lfloor(d-1)/2\rfloor$ residues errors, i.e., $a(x)$ can be accurately reconstructed from all the error-free residues that can be fast located through consistency checks for all pairs of residues $\tilde{a}_i(x)$ for $1\leq i\leq L$. With the above result, it is not hard to see the following decoding algorithm for polynomial remainder codes with non-pairwise coprime moduli and code distance $d$.
 \begin{enumerate}
   \item[$1)$] Perform the consistency checks by (\ref{consi}) for all pairs of residues $\tilde{a}_i(x)$, $1\leq i\leq L$, in the received residue vector.
   \item[$2)$] Take all of those residues each of which appears in at most $\lfloor(d-1)/2\rfloor$ failed consistency checks. If the number of such residues is zero, i.e., for every $i$ with $1\leq i\leq L$, $\tilde{a}_i(x)$ appears in at least $\lfloor(d-1)/2\rfloor+1$ failed consistency checks, the decoding algorithm fails. Otherwise, go to $3)$.
   \item[$3)$] If all the residues found in $2)$ are consistent with each other, use them to reconstruct $a(x)$ as $\hat{a}(x)$ via the CRT for polynomials in (\ref{crtpoly}). Otherwise, $\hat{a}(x)$ cannot be reconstructed and the decoding algorithm fails.
 \end{enumerate}

According to Propositions \ref{p2}, \ref{p3}, if there are $\lfloor(d-1)/2\rfloor$ or fewer errors in the residues, $a(x)$ can be accurately reconstructed with the above decoding algorithm, i.e., $\hat{a}(x)=a(x)$. However, if more than $\lfloor(d-1)/2\rfloor$ errors have occurred in the residues, the decoding algorithm may fail, i.e., $\hat{a}(x)$ may not be reconstructed, or even though $a(x)$ can be reconstructed as $\hat{a}(x)$, $\hat{a}(x)=a(x)$ may not hold. In the rest of the paper, we assume without loss of generality that the non-pairwise coprime moduli $m_1(x),m_2(x),\cdots,m_L(x)$ are $L$ arbitrarily monic and distinct polynomials with degrees greater than $0$, and the following notations are introduced for simplicity:
\begin{itemize}
\item[$1)$]$d_{ij}(x)=\mbox{gcd}\left(m_i(x),m_j(x)\right)$ for $1\leq i,j\leq L,i\neq j$;
\item[$2)$]$\tau_{ij}=\mbox{deg}\left(d_{ij}(x)\right)$ for $1\leq i,j\leq L,i\neq j$;
\item[$3)$]$\tau_j=\min\limits_{i}\{\tau_{ij},\mbox{ for }1\leq i\leq L, i\neq j\}$;
\item[$4)$]$\Gamma_{ij}(x)=\frac{m_i(x)}{d_{ij}(x)}$ for $1\leq i,j\leq L,i\neq j$;
\item[$5)$]$w^{(i)}$ denotes the code distance of the polynomial remainder code with moduli $\Gamma_{ji}(x)$ for $1\leq j\leq L, j\neq i$, which can be calculated according to Proposition \ref{p1};
\item[$6)$]$n_{(i)}$ denotes the $i$-th smallest element in an array of positive integers $\mathcal{S}=\left\{n_1,n_2,\cdots,n_K\right\}$. It is obvious that $n_{(1)}=\min \mathcal{S}$ and $n_{(K)}=\max \mathcal{S}$. See $\mathcal{S}=\{3,1,2,3,8\}$ for example, and we have $n_{(1)}=1, n_{(2)}=2, n_{(3)}=n_{(4)}=3, n_{(5)}=8$.
\end{itemize}

\section{Robust CRT for polynomials}\label{sec2}

Let $m_i(x)$, $1\leq i\leq L$, be $L$ non-pairwise coprime polynomial moduli, $M(x)$ be the lcm of the moduli, and $d$ be the code distance of the polynomial remainder code with the moduli. As stated in the previous section, a polynomial $a(x)$ with $\mbox{deg}\left(a(x)\right)<\mbox{deg}\left(M(x)\right)$ can be accurately reconstructed from its erroneous residue vector $\left(\tilde{a}_1(x),\cdots,\tilde{a}_L(x)\right)$, if there are $\lfloor(d-1)/2\rfloor$ or fewer errors affecting the residue vector $\left(a_1(x),\cdots,a_L(x)\right)$. Note that the reconstruction of $a(x)$ is accurate but only a few of the residues are allowed to have errors, and most of the residues have to be error-free. In this section, we consider a robust reconstruction problem on which all residues $a_i(x)$ for $1\leq i\leq L$ are allowed to have errors $e_i(x)$ with small degrees.

\begin{definition}[Robust CRT for Polynomials]\footnote{The general robustness is that the reconstruction error is linearly bounded by the error bound $\tau$ of the observation. It is well known that the traditional CRT (with pairwise coprime moduli) is not robust in the sense that a small error in a remainder may cause a large reconstruction error \cite{CRT1,CRT2}.}\label{def1}
A CRT for polynomials is said to be robust with the robustness bound $\tau$ if a reconstruction $\hat{a}(x)$ can be calculated from the erroneous residues $\tilde{a}_i(x)$ for $1\leq i\leq L$ such that $\mbox{deg}(\hat{a}(x)-a(x))\leq\tau$ whenever the residues are affected by errors with degrees upper bounded by $\tau$, i.e., $\mbox{deg}\left(e_i(x)\right)\leq \tau<\mbox{deg}\left(m_i(x)\right)$ for $1\leq i\leq L$.
\end{definition}

This robust reconstruction problem we are interested in is two-fold: one is how we can robustly reconstruct $a(x)$; the other is how large the robustness bound $\tau$ can be for the robustness to hold. The basic idea for the robust CRT for polynomials is to accurately determine one of the folding polynomials. Consider an arbitrary index $j$ with $1\leq j\leq L$. If the folding polynomial $k_{j}(x)$ is accurately determined, a robust estimate of $a(x)$ can then be given by
\begin{align}\label{es}
\begin{split}
\hat{a}(x)&=k_{j}(x)m_{j}(x)+\tilde{a}_{j}(x)\\
&=k_{j}(x)m_{j}(x)+a_{j}(x)+e_{j}(x),
\end{split}					
\end{align}
i.e., $\mbox{deg}\left(\hat{a}(x)-a(x)\right)=\mbox{deg}\left(e_{j}(x)\right)\leq\tau$. Therefore, the problem is to derive conditions under which $k_j(x)$ can be accurately determined from the erroneous residues $\tilde{a}_{i}(x)$ for $1\leq i\leq L$. To do so, we follow the algorithm in \cite{xiao} for integers.

Without loss of generality, we arbitrarily select the first equation or remainder for $i=1$ in (\ref{folding}) as a reference to be subtracted from the other equations for $2\leq i\leq L$, respectively, and we have
\begin{equation}\label{fangchengzu1}
\left\{\begin{array}{ll}
k_1(x)m_1(x)-k_2(x)m_2(x)=a_2(x)-a_1(x)\\
k_1(x)m_1(x)-k_3(x)m_3(x)=a_3(x)-a_1(x)\\
\:\:\:\:\vdots\\
k_1(x)m_1(x)-k_L(x)m_L(x)=a_L(x)-a_1(x).
\end{array}\right.
\end{equation}
Denote $q_{i1}(x)=\frac{a_i(x)-a_1(x)}{d_{1i}(x)}$ for $2\leq i\leq L$. Then, dividing $d_{1i}(x)$ from both sides of the $(i-1)$-th equation in (\ref{fangchengzu1}) for $2\leq i\leq L$, we can equivalently write (\ref{fangchengzu1}) as
\begin{equation}\label{fangchengzu2}
\left\{\begin{array}{ll}
k_1(x)\Gamma_{12}(x)-k_2(x)\Gamma_{21}(x)=q_{21}(x)\\
k_1(x)\Gamma_{13}(x)-k_3(x)\Gamma_{31}(x)=q_{31}(x)\\
\:\:\:\:\vdots\\
k_1(x)\Gamma_{1L}(x)-k_L(x)\Gamma_{L1}(x)=q_{L1}(x).
\end{array}\right.
\end{equation}
Since $\Gamma_{1i}(x)$ and $\Gamma_{i1}(x)$ are coprime, by B\'{e}zout's lemma for polynomials we have
\begin{equation}\label{kone}
k_1(x)=q_{i1}(x)\bar{\Gamma}_{1i}(x)+k(x)q_{i1}(x)\Gamma_{i1}(x), \mbox{ for } 2\leq i\leq L,
\end{equation}
where $k(x)$ is some polynomial in $F[x]$, and $\bar{\Gamma}_{1i}(x)$ is the multiplicative inverse of $\Gamma_{1i}(x)$ modulo $\Gamma_{i1}(x)$, i.e., $\bar{\Gamma}_{1i}(x)\Gamma_{1i}(x)\equiv1\mbox{ mod }\Gamma_{i1}(x)$.

Next, we can use
\begin{equation}\label{qest}
\begin{array}{lll}
\hat{q}_{i1}(x)&=&\displaystyle\frac{\tilde{a}_{i}(x)-\tilde{a}_1(x)-\left[\tilde{a}_{i}(x)-\tilde{a}_1(x)\right]_{d_{1i}(x)}}{d_{1i}(x)}\\
&=&\displaystyle q_{i1}(x)+\frac{e_{i}(x)-e_1(x)-\left[ e_{i}(x)-e_1(x)\right]_{d_{1i}(x)}}{d_{1i}(x)}
\end{array}
\end{equation}
as an estimate of $q_{i1}(x)$ for $2\leq i\leq L$ in (\ref{kone}), and we have the following algorithm.

\hrulefill

\textit{Algorithm \uppercase\expandafter{\romannumeral1}:}

\hrulefill
\begin{itemize}
  \item \textbf{Step 1:} Calculate $d_{1i}(x)=\mbox{gcd}\left(m_1(x), m_i(x)\right)$, $\Gamma_{1i}(x)=\frac{m_1(x)}{d_{1i}(x)}$, and $\Gamma_{i1}(x)=\frac{m_i(x)}{d_{1i}(x)}$ for $2 \leq i\leq L$ from the given moduli $m_j(x)$ for $1 \leq j\leq L$, which
can be done in advance.
  \item \textbf{Step 2:} Calculate $\hat{q}_{i1}(x)$ for $2 \leq i\leq L$ in (\ref{qest}) from $m_j(x)$ and the erroneous residues $\tilde{a}_j(x)$ for $1 \leq j\leq L$.
  \item \textbf{Step 3:} Calculate the remainders of $\hat{q}_{i1}(x)\bar{\Gamma}_{1i}(x)$ modulo $\Gamma_{i1}(x)$, i.e.,
  \begin{equation}\label{xixi}
   \hat{\xi}_{i1}(x) \equiv \hat{q}_{i1}(x)\bar{\Gamma}_{1i}(x) \mbox{ mod } \Gamma_{i1}(x)
  \end{equation}
  for $2 \leq i\leq L$, where $\bar{\Gamma}_{1i}(x)$ is the multiplicative inverse of $\Gamma_{1i}(x)$ modulo $\Gamma_{i1}(x)$ and can be calculated in advance.
  \item \textbf{Step 4:} Calculate $\hat{k}_{1}(x)$ from the following system of congruences:
  \begin{equation}\label{gcrt}
  \hat{k}_1(x)\equiv \hat{\xi}_{i1}(x) \mbox{ mod } \Gamma_{i1}(x), \mbox{ for } 2 \leq i\leq L,
  \end{equation}
  where moduli $\Gamma_{i1}(x)$ may not be pairwise coprime. Note that $\hat{k}_1(x)$ is calculated by using the decoding algorithm for the polynomial remainder code with moduli $\Gamma_{i1}(x)$ for $2\leq i\leq L$, based on Propositions \ref{p2}, \ref{p3} in Section \ref{sec1}.
\end{itemize}

\hrulefill

\begin{remark}
As we mentioned before, the basic idea in the robust CRT for polynomials is to accurately determine one of folding polynomials, which is different from the robust CRT for integers \cite{wenjie,xiao} where all folding integers are accurately determined and each determined folding integer provides a reconstruction, and all the reconstructions from all the determined folding integers can then be averaged to provide a better estimate. Accordingly, since we do not need to calculate other folding polynomials $k_i(x)$ for $2\leq i\leq L$ in the above \textit{Algorithm \uppercase\expandafter{\romannumeral1}}, $\hat{q}_{i1}(x)=q_{i1}(x)$ in (\ref{qest}) does not have to hold for all $2\leq i\leq L$, that is, residues $\hat{\xi}_{i1}(x)$ in (\ref{gcrt}), $2\leq i\leq L$, are allowed to have a few errors. This is why we use the decoding algorithm in Section \ref{sec1} to reconstruct $k_1(x)$ in Step $4$ in terms of the polynomial remainder code with moduli $\Gamma_{i1}(x)$ for $2\leq i\leq L$.
\end{remark}

Let $w^{(1)}$ denote the code distance of the polynomial remainder code with moduli $\Gamma_{i1}(x)$ for $2\leq i\leq L$ and $\tau$ be the robustness bound, i.e., $\mbox{deg}\left(e_i(x)\right)\leq\tau$ for $1\leq i\leq L$. With the above algorithm, we have the following lemma.
\begin{lemma}\label{lemma2}
$k_1(x)$ can be accurately determined in \textit{Algorithm \uppercase\expandafter{\romannumeral1}}, i.e., $k_1(x)=\hat{k}_1(x)$,
if the robustness bound $\tau$ satisfies
\begin{equation}\label{con2}
\tau<\tau_{1(\lfloor(w^{(1)}-1)/2\rfloor+1)},
\end{equation}
where $\tau_{ij}=\mbox{deg}\left(d_{ij}(x)\right)$ for $1\leq i,j\leq L,i\neq j$, and $\tau_{1(j)}$ denotes the $j$-th smallest element in $\{\tau_{12},\tau_{13},\cdots,\tau_{1L}\}$.
\end{lemma}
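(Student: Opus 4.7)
The plan is to reduce the lemma to the residue-error-correction guarantee supplied by Propositions \ref{p2} and \ref{p3}, applied not to the original code with moduli $m_i(x)$ but to the auxiliary polynomial remainder code whose moduli are $\Gamma_{i1}(x)$ for $2\leq i\leq L$ and whose code distance is $w^{(1)}$. In that auxiliary code, the ``message'' is $k_1(x)$, its true residues are $\xi_{i1}(x)\equiv q_{i1}(x)\bar{\Gamma}_{1i}(x)\bmod\Gamma_{i1}(x)$, and Step 4 of Algorithm I feeds it the possibly corrupted residues $\hat{\xi}_{i1}(x)$ produced by (\ref{xixi}). The whole task is therefore to bound, under the hypothesis on $\tau$, the number of indices $i$ at which $\hat{\xi}_{i1}(x)\neq\xi_{i1}(x)$ by $\lfloor(w^{(1)}-1)/2\rfloor$.

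First I would analyze a single residue. From (\ref{qest}), the error in $\hat{q}_{i1}(x)$ equals $(e_i(x)-e_1(x)-[e_i(x)-e_1(x)]_{d_{1i}(x)})/d_{1i}(x)$, which vanishes as soon as $d_{1i}(x)$ divides $e_i(x)-e_1(x)$. Since $\deg(e_i(x)),\deg(e_1(x))\leq\tau$, the hypothesis $\tau<\tau_{1i}=\deg(d_{1i}(x))$ forces $\deg(e_i(x)-e_1(x))<\deg(d_{1i}(x))$, so $[e_i(x)-e_1(x)]_{d_{1i}(x)}=e_i(x)-e_1(x)$, giving $\hat{q}_{i1}(x)=q_{i1}(x)$ and consequently $\hat{\xi}_{i1}(x)=\xi_{i1}(x)$ via (\ref{xixi}).

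Next I would count the ``bad'' indices using the order-statistic notation $\tau_{1(j)}$. By definition, $\tau<\tau_{1(\lfloor(w^{(1)}-1)/2\rfloor+1)}$ places $\tau$ strictly below the $(\lfloor(w^{(1)}-1)/2\rfloor+1)$-th smallest element of $\{\tau_{12},\ldots,\tau_{1L}\}$, so at most $\lfloor(w^{(1)}-1)/2\rfloor$ of the values $\tau_{1i}$ can satisfy $\tau_{1i}\leq\tau$. Combined with the previous paragraph, the number of corrupted residues in $(\hat{\xi}_{21}(x),\ldots,\hat{\xi}_{L1}(x))$ is at most $\lfloor(w^{(1)}-1)/2\rfloor$. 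The decoding invoked in Step 4, which is precisely the algorithm justified by Propositions \ref{p2} and \ref{p3} for the code with moduli $\Gamma_{i1}(x)$ and code distance $w^{(1)}$, therefore recovers $k_1(x)$ exactly, i.e., $\hat{k}_1(x)=k_1(x)$.

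The step I expect to demand the most care is an implicit prerequisite for applying the auxiliary CRT decoder to $k_1(x)$: one must verify $\deg(k_1(x))<\deg(\mathrm{lcm}_{i=2}^{L}\Gamma_{i1}(x))$ so that $k_1(x)$ is the unique preimage of its residue vector under the auxiliary code. This follows from $\deg(k_1(x))<\deg(M(x))-\deg(m_1(x))$, immediate from (\ref{folding}) with $i=1$ and the bound $\deg(a(x))<\deg(M(x))$, together with the identity $\mathrm{lcm}_{i=2}^{L}\Gamma_{i1}(x)=M(x)/m_1(x)$, which is a short irreducible-factor-by-factor computation using the definitions of $\Gamma_{i1}(x)$ and $M(x)$. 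Once this degree bound is checked, the lemma is essentially an assembly of the observations above.
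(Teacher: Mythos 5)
Your proposal is correct and follows essentially the same route as the paper's proof: showing that $\tau<\tau_{1i}$ forces $\hat{q}_{i1}(x)=q_{i1}(x)$, using the order statistic to bound the number of corrupted residues $\hat{\xi}_{i1}(x)$ by $\lfloor(w^{(1)}-1)/2\rfloor$, verifying $\deg(k_1(x))<\deg(\mathrm{lcm}_{i=2}^{L}\Gamma_{i1}(x))=\deg(M(x))-\deg(m_1(x))$, and then invoking the decoder based on Propositions \ref{p2} and \ref{p3}. The only cosmetic difference is that the paper establishes the degree bound on $k_1(x)$ first rather than last.
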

\begin{proof}
Let $\Gamma(x)=\mbox{lcm}\left(\Gamma_{21}(x),\Gamma_{31}(x),\cdots,\Gamma_{L1}(x)\right)$. We first prove $\mbox{deg}\left(k_1(x)\right)<\mbox{deg}\left(\Gamma(x)\right)$. If $\mbox{deg}\left(k_1(x)\right)=0$, it is obvious for $\mbox{deg}\left(k_1(x)\right)<\mbox{deg}\left(\Gamma(x)\right)$. If $\mbox{deg}\left(k_1(x)\right)\neq0$, we have $\mbox{deg}\left(k_{1}(x)m_{1}(x)\right)=\mbox{deg}\left(k_{1}(x)\right)+\mbox{deg}\left(m_{1}(x)\right)= \mbox{deg}\left(a(x)\right)<\mbox{deg}\left(M(x)\right)=\mbox{deg}\left(m_1(x)\Gamma(x)\right)=\mbox{deg}\left(m_1(x)\right)+\mbox{deg}\left(
\Gamma(x)\right)$, and thus $\mbox{deg}\left(k_1(x)\right)<\mbox{deg}\left(\Gamma(x)\right)$.

Among the residue errors $e_i(x)$ for $2\leq i\leq L$, there are $v\geq L-1-\lfloor\frac{w^{(1)}-1}{2}\rfloor$ errors $e_{i_j}(x)$ for $1\leq j\leq v$ such that $\mbox{deg}\left(e_{i_j}(x)-e_{1}(x)\right)\leq\tau<\mbox{deg}\left(d_{1i_j}(x)\right)$ according to (\ref{con2}). So, we have $[e_{i_j}(x)-e_{1}(x)]_{d_{1i_j}(x)}=e_{i_j}(x)-e_{1}(x)$. From (\ref{qest}), we have $\hat{q}_{i_j1}(x)=q_{i_j1}(x)$, and it is not hard to see from (\ref{kone}) that $k_1(x)$ and $\hat{q}_{i_j1}(x)\bar{\Gamma}_{1i_j}(x)$ have the same remainder modulo $\Gamma_{i_j1}(x)$, i.e., $k_1(x)\equiv \hat{\xi}_{i_j1}(x) \mbox{ mod } \Gamma_{i_j1}(x)$. Then, regard $(\hat{\xi}_{21}(x),\hat{\xi}_{31}(x),\cdots,\hat{\xi}_{L1}(x))$ in (\ref{gcrt}) as an erroneous residue vector of $k_1(x)$ modulo $\Gamma_{i1}(x)$ for $2\leq i\leq L$. Since there are at most $\lfloor(w^{(1)}-1)/2\rfloor$ errors in $(\hat{\xi}_{21}(x),\hat{\xi}_{31}(x),\cdots,\hat{\xi}_{L1}(x))$, we can accurately determine the folding polynomial $k_1(x)$ in Step $4$ of \textit{Algorithm \uppercase\expandafter{\romannumeral1}}, i.e., $\hat{k}_1(x)=k_1(x)$, by applying the residue error correction algorithm based on Propositions \ref{p2}, \ref{p3} for the polynomial remainder code with moduli $\Gamma_{i1}(x)$ for $2\leq i\leq L$.
\end{proof}

Recall that $m_1(x)$ or $a_1(x)$ in the above \textit{Algorithm \uppercase\expandafter{\romannumeral1}} is arbitrarily selected to be a reference, which is not necessary. In fact, any remainder can be taken as the reference. In order to improve the maximal possible robustness bound, we next present the following theorem through selecting a proper reference folding polynomial.

\begin{theorem}\label{cor2}
If the robustness bound $\tau$ satisfies
\begin{equation}\label{ccc}
\tau< \max_{1\leq i\leq L}\left\{\tau_{i(\lfloor(w^{(i)}-1)/2\rfloor+1)}\right\},
\end{equation}
where $w^{(i)}$ is the code distance of the polynomial remainder code with moduli $\Gamma_{ji}(x)$ for $1\leq j\leq L, j\neq i$, and $\tau_{i(j)}$ denotes the $j$-th smallest element in $\{\tau_{ik},\mbox{ for }1\leq k\leq L, k\neq i\}$, then $a(x)$ can be robustly reconstructed through \textit{Algorithm \uppercase\expandafter{\romannumeral1}}, that is, the robust CRT for polynomials in Definition \ref{def1} holds.
\end{theorem}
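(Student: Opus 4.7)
The plan is to observe that the choice of index $1$ as the reference in Algorithm I and in Lemma \ref{lemma2} was entirely arbitrary, and to exploit this freedom by selecting the reference that yields the largest tolerable robustness bound. For every fixed $i\in\{1,\ldots,L\}$, I would run a reindexed version of Algorithm I in which the $i$-th residue equation plays the role of the reference: subtract (\ref{folding}) for index $i$ from the remaining $L-1$ equations, divide each difference by the appropriate gcd $d_{ij}(x)$, and apply B\'{e}zout's lemma as in (\ref{kone}) to obtain
\[
k_i(x)\equiv q_{ji}(x)\bar{\Gamma}_{ij}(x)\pmod{\Gamma_{ji}(x)},\quad j\neq i,
\]
so that the quantities $\hat{\xi}_{ji}(x)$ defined in analogy with (\ref{xixi}) form a (possibly corrupted) codeword in the polynomial remainder code with moduli $\{\Gamma_{ji}(x)\}_{j\neq i}$, whose code distance is $w^{(i)}$ by the very definition of $w^{(i)}$ given in the notational list of Section \ref{sec1}.

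Next I would establish the analogue of Lemma \ref{lemma2} for this reindexed algorithm. The derivation is a verbatim relabeling of the original proof: first, $\mbox{deg}(k_i(x))<\mbox{deg}(\Gamma^{(i)}(x))$, where $\Gamma^{(i)}(x)=\mbox{lcm}(\{\Gamma_{ji}(x):j\neq i\})$, follows from $\mbox{deg}(a(x))<\mbox{deg}(M(x))=\mbox{deg}(m_i(x))+\mbox{deg}(\Gamma^{(i)}(x))$; second, under the hypothesis $\tau<\tau_{i(\lfloor(w^{(i)}-1)/2\rfloor+1)}$, the number of indices $j\neq i$ for which $\mbox{deg}(e_j(x)-e_i(x))\geq\mbox{deg}(d_{ij}(x))$ is at most $\lfloor(w^{(i)}-1)/2\rfloor$, since otherwise the $(\lfloor(w^{(i)}-1)/2\rfloor+1)$-th smallest value among $\{\tau_{ij}\}_{j\neq i}$ would be bounded above by $\tau$, contradicting the assumption. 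Therefore $\hat{\xi}_{ji}(x)=[k_i(x)]_{\Gamma_{ji}(x)}$ for all but at most $\lfloor(w^{(i)}-1)/2\rfloor$ of the $L-1$ indices $j\neq i$, and Propositions \ref{p2} and \ref{p3} guarantee that Step $4$ reconstructs $k_i(x)$ exactly.

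Once $\hat{k}_i(x)=k_i(x)$ has been obtained, the reconstruction $\hat{a}(x)=\hat{k}_i(x)m_i(x)+\tilde{a}_i(x)$ from (\ref{es}) (with $j$ replaced by $i$) satisfies $\hat{a}(x)-a(x)=e_i(x)$, whence $\mbox{deg}(\hat{a}(x)-a(x))=\mbox{deg}(e_i(x))\leq\tau$, which is precisely the condition required by Definition \ref{def1}. The theorem then follows by running the reindexed algorithm with the particular reference $i^{\star}$ attaining the maximum on the right-hand side of (\ref{ccc}); under that choice, the hypothesis (\ref{ccc}) specializes to the hypothesis of the reindexed Lemma \ref{lemma2}. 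I do not anticipate any genuine obstacle: the entire argument is a reduction to Lemma \ref{lemma2} by a relabeling of indices, and the only point worth verifying carefully is that each step of Algorithm I, together with its correctness proof, respects the full symmetry between the reference index and the remaining $L-1$ indices, which it manifestly does.
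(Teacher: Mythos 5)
Your proposal is correct and follows essentially the same route as the paper: the paper's proof of Theorem \ref{cor2} likewise picks the index $i_0$ attaining the maximum in (\ref{ccc}), relabels the reference in \textit{Algorithm \uppercase\expandafter{\romannumeral1}}, and invokes the relabeled Lemma \ref{lemma2} to determine $k_{i_0}(x)$ exactly and hence obtain $\mbox{deg}(\hat{a}(x)-a(x))\leq\tau$. Your additional spelling-out of the counting argument (at most $\lfloor(w^{(i)}-1)/2\rfloor$ corrupted residues $\hat{\xi}_{ji}(x)$) is just the content of Lemma \ref{lemma2}'s proof made explicit, so there is nothing substantively different.
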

\begin{proof}
Let us choose such an index $i_0$ that
\begin{equation}\label{refer}
\tau_{i_0(\lfloor(w^{(i_0)}-1)/2\rfloor+1)}=\max_{1\leq i\leq L}\left\{\tau_{i(\lfloor(w^{(i)}-1)/2\rfloor+1)}\right\}.
\end{equation}
Then, replacing the index $1$ with $i_0$ and taking $\tilde{a}_{i_0}(x)$ as the reference in \textit{Algorithm \uppercase\expandafter{\romannumeral1}}, we can accurately determine $k_{i_0}(x)$ under the condition (\ref{ccc}), thereby robustly reconstructing $a(x)$ as $\hat{a}(x)$ in (\ref{es}), i.e., $\mbox{deg}\left(\hat{a}(x)-a(x)\right)\leq\tau$.
\end{proof}

\begin{remark}
From (\ref{refer}), it guarantees that there are at most $\lfloor(w^{(i_0)}-1)/2\rfloor$ errors in the residues $\hat{\xi}_{ii_0}(x)$ for $1\leq i\leq L,i\neq i_0$ in Step $4$ of \textit{Algorithm \uppercase\expandafter{\romannumeral1}} to determine $k_{i_0}(x)$ with respect to moduli $\Gamma_{ii_0}(x)$. If $w^{(i_0)}\geq3$, $k_{i_0}(x)$ is accurately determined based on the residue error correction algorithm in Section \ref{sec1} for the polynomial remainder code with moduli $\Gamma_{ii_0}(x)$ for $1\leq i\leq L,i\neq i_0$. If $w^{(i_0)}<3$, all the residues $\hat{\xi}_{ii_0}(x)$ for $1\leq i\leq L,i\neq i_0$, are accurate and consistent, and $k_{i_0}(x)$ is accurately determined via the CRT for polynomials from all the $L-1$ residues $\hat{\xi}_{ii_0}(x)$.
\end{remark}

\begin{example}
Let us consider a notable class of polynomial remainder codes with special moduli (the corresponding integer residue codes were introduced in \cite{bm,chessa}), i.e., $m_i(x)=\prod\limits_{j\in[1,L];j\neq i}d_{ij}(x)$ for $1\leq i\leq L$ and $\{d_{ij}(x), \mbox{for }1\leq i\leq L; i<j\leq L\}$ are pairwise coprime. Let $d_{12}(x)=(x+1)^4, d_{13}(x)=(x-1)^4, d_{14}(x)=(x+2)^4, d_{23}(x)=(x-2)^4, d_{24}(x)=(x+3)^4, d_{34}(x)=(x-3)^4$. Since $\mbox{deg}\left(d_{ij}(x)\right)=4$ holds for every $1\leq i,j\leq 4, i\neq j$, it is easy to see from Theorem \ref{cor2} that the robustness bound is $\tau<4$, i.e., any $a(x)$ with $\mbox{deg}\left(a(x)\right)<\mbox{deg}\left(\mbox{lcm}(m_1(x),\cdots,m_4(x))\right)=24$ can be robustly reconstructed from its erroneous residues when the degrees of all residue errors are less than $4$.
\end{example}

If the above result is referred to as the single stage robust CRT for polynomials, multi-stage robust CRT for polynomials can be easily derived by following the method used for integers in \cite{xiao}. Similarly, multi-stage robust CRT for polynomials may improve the bound for $\tau$ obtained in Theorem \ref{cor2} for a given set of polynomial moduli. Another remark we make here is that a residue error $e_i(x)$ is said to be a bounded error with an error bound $l$ if its degree is less than or equal to $l$, where $l$ is a small positive integer. What Theorem \ref{cor2} tells us is that for the set of moduli $m_i(x)$ in the above, a polynomial $a(x)$ with $\mbox{deg}\left(a(x)\right)<\mbox{deg}\left(M(x)\right)$ can be robustly reconstructed from its erroneous residues if all residue errors are bounded, and the error bound $\tau$ is given by (\ref{ccc}). Later, the constraint that all residue errors are bounded will be released, and the combined occurrence of multiple unrestricted errors and an arbitrary number of bounded errors in the residues will be considered in the next section.

\section{Robust reconstruction under multiple unrestricted errors and an arbitrary number of bounded errors in the residues}\label{sec4}

Consider again the $L$ non-pairwise coprime moduli $m_i(x)$ for $1\leq i\leq L$. In this section, we assume that there are $t\leq\lfloor(d-1)/2\rfloor$ unrestricted errors and an arbitrary number of bounded errors with the error bound $\lambda$ in the received residue vector $\left(\tilde{a}_1(x),\cdots,\tilde{a}_L(x)\right)$. Similarly in this case, the robust reconstruction problems for us are: 1) how can we robustly reconstruct $a(x)$? 2) how large can the error bound $\lambda$ be for the robustness to hold? Note that $d\geq3$ is necessarily assumed in this section, otherwise it is degenerated to the case of robust CRT for polynomials in Section \ref{sec2}. Therefore, due to the existence of unrestricted residue errors, the bound for $\lambda$ is expected to be smaller than or equal to the bound for $\tau$ as in (\ref{ccc}). In order to answer the above questions, we first give the following lemmas.

\begin{lemma}\label{lemma1}
Let $w^{(i)}$ denote the code distance of the polynomial remainder code with moduli $\Gamma_{ji}(x)$ for $1\leq j\leq L, j\neq i$. Then, we have \begin{equation}
\min\{w^{(1)},w^{(2)},\cdots,w^{(L)}\}=d,
\end{equation}
where $d$ is the code distance of the polynomial remainder code with moduli $m_i(x)$ for $1\leq i\leq L$.
\end{lemma}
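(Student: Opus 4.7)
The plan is to use Proposition \ref{p1} twice: once to re-express $d$, once to compute each $w^{(i)}$, and then read off the equality by a direct comparison. Factor $M(x) = \prod_{k=1}^{K} p_k(x)^{t_k}$ as in (\ref{fenie}), and for every $i$ and $k$ let $t_{i,k}$ denote the largest integer with $p_k(x)^{t_{i,k}}$ dividing $m_i(x)$. Then $\max_{i} t_{i,k} = t_k$, and the $d_k$ of Proposition \ref{p1} is $d_k = \#\{\,i : t_{i,k}=t_k\,\}$, so $d=\min_k d_k$.

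Next I would compute the prime-power content of the moduli of the ``inner'' code $\{\Gamma_{ji}(x)\}_{j\neq i}$. Because the $p_k$-adic valuation of $\mbox{gcd}(m_i(x),m_j(x))$ is $\min(t_{i,k},t_{j,k})$, the corresponding valuation of $\Gamma_{ji}(x)=m_j(x)/d_{ij}(x)$ is $\max(0,\,t_{j,k}-t_{i,k})$. Taking the lcm over $j\neq i$, the polynomial $M^{(i)}(x):=\mbox{lcm}_{j\neq i}\Gamma_{ji}(x)$ has $p_k(x)^{t_k-t_{i,k}}$ as its $p_k$-part precisely when $t_{i,k}<t_k$. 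Applying Proposition \ref{p1} to this inner code, for each such $k$ I need the number of $j\neq i$ with $t_{j,k}-t_{i,k}\geq t_k-t_{i,k}$, i.e.\ $t_{j,k}=t_k$; since $t_{i,k}<t_k$, index $i$ does not contribute, so this count equals $d_k$ exactly. Hence
\begin{equation*}
w^{(i)} \;=\; \min\{\,d_k : t_{i,k}<t_k\,\}.
\end{equation*}

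The conclusion is then in two short steps. Since $w^{(i)}$ is a minimum taken over a subset of $\{d_1,\ldots,d_K\}$, we immediately get $w^{(i)}\geq d$ for every $i$. For the reverse direction, choose $k^*$ achieving $d_{k^*}=d$; because the $L$ moduli are \emph{distinct}, we cannot have $d_{k^*}=L$ (otherwise every $m_j(x)$ would equal $M(x)$), so there is some index $i$ with $t_{i,k^*}<t_{k^*}$, and for that $i$ the index $k^*$ participates in the minimization defining $w^{(i)}$, giving $w^{(i)}\leq d_{k^*}=d$. Minimizing over $i$ yields $\min_i w^{(i)}=d$.

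The main obstacle I anticipate is really just bookkeeping the valuations through the $\mbox{gcd}$/$\mbox{lcm}$/quotient operations at each prime, together with the mild edge case in which some $m_i(x)=M(x)$: there the set $\{k:t_{i,k}<t_k\}$ is empty and every $\Gamma_{ji}(x)=1$, so $w^{(i)}$ is vacuous (conventionally $+\infty$) and that $i$ simply does not compete in the outer minimum. Everything else falls out directly from Proposition \ref{p1}.
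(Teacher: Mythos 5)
Your proof is correct and follows essentially the same route as the paper's: both apply Proposition \ref{p1} twice, track the $p_k$-adic valuations through the gcd/quotient/lcm operations to show that $w^{(i)}$ is the minimum of $\{d_k : p_k(x)^{t_k} \nmid m_i(x)\}\subseteq\{d_1,\ldots,d_K\}$, and then invoke distinctness of the moduli to rule out $d=L$ and obtain the reverse inequality. Your explicit formula for $w^{(i)}$ and the handling of the $m_i(x)=M(x)$ edge case make the bookkeeping slightly more transparent, but the substance is identical.
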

\begin{proof}
First, let us prove $w^{(i)}\geq d$ for each $1\leq i\leq L$. Without loss of generality, we only need to prove $w^{(1)}\geq d$.
Let $M(x)$ be written as in (\ref{fenie}), i.e.,
\begin{equation}
M(x)=p_1(x)^{t_1}p_2(x)^{t_2}\cdots p_K(x)^{t_K},
\end{equation}
where the polynomials $p_i(x)$ are pairwise coprime, monic and irreducible, and $t_i$ is a positive integer for all $1\leq i\leq K$. Define $\Gamma(x)=\mbox{lcm}\left(\Gamma_{21}(x),\Gamma_{31}(x),\cdots,\Gamma_{L1}(x)\right)$. Since $M(x)=m_1(x)\Gamma(x)$, we can write $m_1(x)$ and $\Gamma(x)$ as
\begin{multline}
\;\;\;\;\;\;\;\;\;\,m_1(x)=p_1(x)^{l_1}p_2(x)^{l_2}\cdots p_K(x)^{l_K}\mbox{ and } \\
\Gamma(x)=p_1(x)^{t_1-l_1}p_2(x)^{t_2-l_2}\cdots p_K(x)^{t_K-l_K},
\end{multline}
where $t_i\geq l_i\geq 0$ for $1\leq i\leq K$. First, consider $0\leq l_i<t_i$, and let $w_i^{(1)}$ represent the number of moduli $\Gamma_{21}(x),\cdots,\Gamma_{L1}(x)$ that contain the factor $p_i(x)^{t_i-l_i}$. In this case, we have $w_i^{(1)}=d_i$, where $d_i$ is defined in Proposition \ref{p1}, because for every $j$ with $2\leq j\leq L$, $m_j(x)$ contains $p_i(x)^{t_i}$ if and only if its corresponding $\Gamma_{j1}(x)=\frac{m_j(x)}{d_{1j}(x)}$ contains the factor $p_i(x)^{t_i-l_i}$. Next, consider $l_i=t_i$, and $\Gamma(x)$ does not contain the item of $p_i(x)$. Hence, according to Proposition \ref{p1}, $w^{(1)}$ is the minimum of $\{d_1,d_2,\cdots,d_K\}$, i.e., $w^{(1)}=d$, if $0\leq l_i<t_i$ holds for all $1\leq i\leq K$, else $w^{(1)}$ is the minimum of a subset of $\{d_1,d_2,\cdots,d_K\}$, i.e., $w^{(1)}\geq d$. So, we have $w^{(1)}\geq d$. Note that the above proof is independent of an arbitrary choice $i=1$ for $w^{(i)}$. Therefore, we have $w^{(i)}\geq d$ for each $1\leq i\leq L$.

Next, we prove that there is at least one $i$ such that $w^{(i)}=d$. Without loss of generality, we assume that $d_1=d$. From the above analysis, if $w^{(1)}>d$, we must have $l_1=t_1$, i.e., $m_1(x)$ contains the factor $p_1(x)^{t_1}$. Similarly, if all $w^{(i)}$ for $1\leq i\leq L$ are strictly larger than $d$, we know that all $m_i(x)$ for $1\leq i\leq L$ contain the factor $p_1(x)^{t_1}$, i.e., $d_1=d=L$. Thus, $d_i=L$ for all $1\leq i\leq L$. It is in contradiction with the assumption in the end of Section \ref{sec1} that $m_1(x),m_2(x),\cdots,m_L(x)$ are monic and distinct polynomials with degrees greater than $0$. Thus, we have $\min\{w^{(1)},\cdots,w^{(L)}\}=d$.
\end{proof}

\begin{lemma}\label{31}
Let $d$ denote the code distance of the polynomial remainder code with moduli $m_i(x)$ for $1\leq i\leq L$. Assume that there are $t\leq\lfloor(d-1)/2\rfloor$ unrestricted errors, and any other error is bounded in the received residue vector $\left(\tilde{a}_1(x),\cdots,\tilde{a}_L(x)\right)$. The error bound $\lambda$ here is assumed less than $\tau_1$, where $\tau_1=\min\limits_{j}\{\tau_{1j},\mbox{ for }2\leq j\leq L\}$ and $\tau_{1j}=\mbox{deg}\left(d_{1j}(x)\right)$. If $\tilde{a}_1(x)$ is known as an error-free residue or a residue with a bounded error, we can accurately determine $k_1(x)$ using \textit{Algorithm \uppercase\expandafter{\romannumeral1}}, i.e., $\hat{k}_1(x)=k_1(x)$. However, if $\tilde{a}_1(x)$ is known as a residue with an error of degree greater than $\lambda$, $\hat{k}_1(x)$ may not be reconstructed, and even though $\hat{k}_1(x)$ is reconstructed in \textit{Algorithm \uppercase\expandafter{\romannumeral1}}, $\hat{k}_1(x)=k_1(x)$ may not hold.
\end{lemma}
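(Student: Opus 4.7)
The plan is to split the lemma into its two assertions. The positive part says: if $\tilde{a}_1(x)$ carries at most a bounded error, Algorithm I recovers $k_1(x)$ exactly. The negative part says: if $\tilde{a}_1(x)$ carries an unrestricted error, the recovery may fail. Both arguments will be routed through Algorithm I and will reduce to counting how many of the derived residues $\hat{\xi}_{j1}(x)$ in (\ref{gcrt}) disagree with the true residue $[k_1(x)]_{\Gamma_{j1}(x)}$ of the folding polynomial $k_1(x)$. A convenient first step is to apply Lemma \ref{lemma1} to obtain $w^{(1)} \geq d$, so the auxiliary polynomial remainder code with moduli $\Gamma_{j1}(x)$ for $2 \leq j \leq L$ has code distance at least $d$ and can correct up to $\lfloor(d-1)/2\rfloor$ residue errors via Propositions \ref{p2} and \ref{p3}.

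For the positive direction, assume $\deg(e_1(x)) \leq \lambda < \tau_1 \leq \tau_{1j}$. For each index $j \neq 1$ such that $e_j(x)$ is also bounded by $\lambda$, I would compute $\deg(e_j(x) - e_1(x)) \leq \lambda < \tau_{1j} = \deg(d_{1j}(x))$, so that $[e_j(x) - e_1(x)]_{d_{1j}(x)} = e_j(x) - e_1(x)$ and the error term in (\ref{qest}) vanishes, yielding $\hat{q}_{j1}(x) = q_{j1}(x)$ and hence $\hat{\xi}_{j1}(x) \equiv k_1(x) \bmod \Gamma_{j1}(x)$. The only positions where $\hat{\xi}_{j1}(x)$ can deviate are those carrying one of the at most $t$ unrestricted residue errors. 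Therefore the input to Step~4 has at most $t \leq \lfloor(d-1)/2\rfloor \leq \lfloor(w^{(1)}-1)/2\rfloor$ errors relative to the true codeword encoding $k_1(x)$, and the decoder returns $\hat{k}_1(x) = k_1(x)$. The bound $\deg(k_1(x)) < \deg(\Gamma(x))$ required to legitimately view $k_1(x)$ as an information polynomial in this auxiliary code is verified exactly as in the proof of Lemma \ref{lemma2}.

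For the negative direction, it is enough to exhibit, for any given moduli satisfying the hypotheses, a residue-error pattern with $\deg(e_1(x)) > \lambda$, all other $e_j(x)$ bounded or zero, under which Algorithm I either fails or outputs $\hat{k}_1(x) \neq k_1(x)$. The construction is to let $\deg(e_1(x))$ exceed $\tau_{1j}$ for as many indices $j$ as possible; at every such $j$ the reduction $[e_j(x) - e_1(x)]_{d_{1j}(x)}$ no longer coincides with $e_j(x) - e_1(x)$, so $\hat{q}_{j1}(x) \neq q_{j1}(x)$ and $\hat{\xi}_{j1}(x)$ departs from $[k_1(x)]_{\Gamma_{j1}(x)}$. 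By choosing the high-degree part of $e_1(x)$ to trigger this at more than $\lfloor(w^{(1)}-1)/2\rfloor$ positions, the decoding guarantee is destroyed, and by tuning the lower-degree coefficients one can steer the output either into a decoding failure or into the decoding sphere of some $k_1^\ast(x) \neq k_1(x)$. The main obstacle here is precisely that merely violating the sufficient condition of the positive direction does not automatically produce a failure; one has to exhibit error coefficients that actually realize each failure mode. I would address this by giving a small explicit example (for instance with repeated factors so the induced code degenerates toward a repetition code, making the failure patterns transparent) and arguing that the same mechanism extends, which matches the qualitative ``may not'' phrasing of the statement.
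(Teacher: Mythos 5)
Your proof of the positive assertion is essentially the paper's own argument: invoke Lemma \ref{lemma1} to get $w^{(1)}\geq d$, observe that whenever both $e_1(x)$ and $e_j(x)$ are bounded by $\lambda<\tau_1\leq\tau_{1j}$ the reduction modulo $d_{1j}(x)$ is exact, so $\hat{q}_{j1}(x)=q_{j1}(x)$ and $\hat{\xi}_{j1}(x)\equiv k_1(x)\bmod \Gamma_{j1}(x)$, and then count that at most $t\leq\lfloor(d-1)/2\rfloor\leq\lfloor(w^{(1)}-1)/2\rfloor$ of the residues fed to Step 4 can be wrong, so the decoder of Propositions \ref{p2} and \ref{p3} returns $k_1(x)$; your extra remark that $\mbox{deg}(k_1(x))<\mbox{deg}(\Gamma(x))$ must be checked as in Lemma \ref{lemma2} is a detail the paper's proof of this lemma omits but had already established there. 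The only divergence is in the negative assertion: the paper reads ``may not'' as the mere absence of a guarantee and proves exactly that --- when $\mbox{deg}(e_1(x))>\lambda$ one can no longer certify that at most $\lfloor(w^{(1)}-1)/2\rfloor$ of the $\hat{\xi}_{j1}(x)$ are in error, so the decoding guarantee evaporates --- whereas you set yourself the stronger task of exhibiting an explicit error pattern that actually forces failure or miscorrection. That stronger claim is not needed for the statement, and your construction is only sketched: as you yourself note, $\hat{q}_{j1}(x)\neq q_{j1}(x)$ does not by itself imply $\hat{\xi}_{j1}(x)\not\equiv k_1(x)\bmod\Gamma_{j1}(x)$, and for degenerate moduli sets (e.g., when $\tau_{1j}=\mbox{deg}(m_1(x))$ for many $j$, or when the auxiliary code is highly redundant) a universal counterexample need not exist. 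So either complete an explicit example or, more simply, retreat to the paper's weaker but sufficient reading; as written, the second half of your argument promises more than it delivers, though this does not affect the correctness of the lemma's substantive (positive) claim.
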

\begin{proof}
If $\tilde{a}_i(x)$ with $i>1$ is an error-free residue or a residue with a bounded error, i.e., $e_i(x)=0$ or $e_i(x)\neq 0\mbox{ with deg}\left(e_i(x)\right)\leq\lambda$, we have $e_i(x)-e_1(x)=[e_i(x)-e_1(x)]_{d_{1i}(x)}$. This is due to the fact that $\mbox{deg}\left(e_i(x)-e_1(x)\right)\leq\lambda<\tau_1\leq\mbox{deg}\left(d_{1i}(x)\right)$. Therefore, we have $k_1(x)\equiv\hat{\xi}_{i1}(x)\mbox{ mod }\Gamma_{i1}(x)$ from (\ref{kone}), (\ref{qest}), and (\ref{xixi}). Since there are only $t$ unrestricted residue errors and any other error is bounded  in the residues, there are at most $t$ residue errors with degrees greater than $\lambda$. In other words, there are at least $L-1-t$ residues $\tilde{a}_i(x)$ with $i\neq1$ that are error-free or with bounded errors. Therefore, there are at most $t$ errors in $(\hat{\xi}_{21}(x),\hat{\xi}_{31}(x),\cdots,\hat{\xi}_{L1}(x))$ to calculate $k_1(x)$ in Step $4$ of \textit{Algorithm \uppercase\expandafter{\romannumeral1}}.
Due to $t\leq\lfloor(d-1)/2\rfloor$ and $d\leq w^{(1)}$, $k_1(x)$ can be accurately determined in \textit{Algorithm \uppercase\expandafter{\romannumeral1}} by applying the residue error correction algorithm based on Propositions \ref{p2}, \ref{p3} for the polynomial remainder code with moduli $\Gamma_{i1}(x)$ for $2\leq i\leq L$, i.e., $\hat{k}_1(x)=k_1(x)$.

However, if $\tilde{a}_1(x)$ is known as a residue with an error of degree greater than $\lambda$, it is not guaranteed that there are at most $\lfloor(w^{(1)}-1)/2\rfloor$ errors in $(\hat{\xi}_{21}(x),\cdots,\hat{\xi}_{L1}(x))$ in Step $4$ of \textit{Algorithm \uppercase\expandafter{\romannumeral1}}. Therefore, following the decoding algorithm in Section \ref{sec1}, $\hat{k}_1(x)$ may not be reconstructed, and even though $\hat{k}_1(x)$ is reconstructed, $\hat{k}_1(x)=k_1(x)$ may not hold.
\end{proof}

From the above results, we have the following theorem.
\begin{theorem}\label{t2}
Let $m_i(x)$, $1\leq i\leq L$, be $L$ non-pairwise coprime moduli, $M(x)$ be the lcm of the moduli, and the erroneous residue vector of $a(x)$ with $\mbox{deg}\left(a(x)\right)<\mbox{deg}\left(M(x)\right)$ be denoted as $\left(\tilde{a}_1(x),\tilde{a}_2(x),\cdots,\tilde{a}_L(x)\right)$. Denote by $d$ the code distance of the polynomial remainder code with moduli $m_i(x)$ for $1\leq i\leq L$. Assume that there are $t\leq\lfloor(d-1)/2\rfloor$ unrestricted errors and an arbitrary number of bounded errors in the residues. Then, if the remainder error bound $\lambda$ satisfies
\begin{equation}\label{gi}
\lambda<\tau_{(L-2\lfloor(d-1)/2\rfloor)},
\end{equation}
where $\tau_{(i)}$ denotes the $i$-th smallest element in $\{\tau_1,\cdots,\tau_L\}$, each $\tau_j$ for $1\leq j\leq L$ is defined as $\tau_j=\min\limits_{i}\{\tau_{ij},\mbox{ for }1\leq i\leq L, i\neq j\}$, and $\tau_{ij}=\mbox{deg}\left(d_{ij}(x)\right)$,
we can robustly reconstruct $a(x)$ as $\hat{a}(x)$, i.e., $\mbox{deg}\left(a(x)-\hat{a}(x)\right)\leq\lambda$, by following \textit{Algorithm \uppercase\expandafter{\romannumeral1}}.
\end{theorem}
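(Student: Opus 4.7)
The plan is to reduce the theorem to Lemma~\ref{31} via a pigeonhole argument. Specifically, I will exhibit an index $i_0\in\{1,\ldots,L\}$ satisfying both (a) $\tau_{i_0}>\lambda$ and (b) $\tilde{a}_{i_0}(x)$ is either error-free or carries a bounded error. Once such an $i_0$ is in hand, Lemma~\ref{31} (applied with the reference relabelled as $i_0$) guarantees that Algorithm~\uppercase\expandafter{\romannumeral1} returns $\hat{k}_{i_0}(x)=k_{i_0}(x)$, so that
\[
\hat{a}(x) := \hat{k}_{i_0}(x)m_{i_0}(x)+\tilde{a}_{i_0}(x) = a(x)+e_{i_0}(x)
\]
yields $\mbox{deg}(\hat{a}(x)-a(x))=\mbox{deg}(e_{i_0}(x))\leq\lambda$ as required.

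The counting is the core of the argument. First, the hypothesis $\lambda<\tau_{(L-2\lfloor(d-1)/2\rfloor)}$ means that at most $L-2\lfloor(d-1)/2\rfloor-1$ of the values $\tau_1,\ldots,\tau_L$ are $\leq\lambda$, so at least $2\lfloor(d-1)/2\rfloor+1$ indices satisfy (a). Second, at most $t\leq\lfloor(d-1)/2\rfloor$ of the $L$ residues carry unrestricted errors, so excluding those indices from the set satisfying (a) still leaves at least $(2\lfloor(d-1)/2\rfloor+1)-\lfloor(d-1)/2\rfloor = \lfloor(d-1)/2\rfloor+1$ indices satisfying both (a) and (b). Any of these serves as $i_0$.

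The main obstacle I foresee is that the above is purely an existence statement: the decoder does not know in advance which residues carry unrestricted errors, and so it cannot single out $i_0$ directly. A natural fix is to run Algorithm~\uppercase\expandafter{\romannumeral1} once with each candidate reference $j$ for which $\tau_j>\lambda$, collect the candidate reconstructions $\hat{a}^{(j)}(x)$, and then declare $\hat{a}(x)=\hat{a}^{(j^\star)}(x)$ for some $j^\star$ such that $\mbox{deg}(\hat{a}^{(j^\star)}(x)-\hat{a}^{(j)}(x))\leq\lambda$ holds for at least $\lfloor(d-1)/2\rfloor+1$ distinct $j$. The pigeonhole count above ensures that the $\geq\lfloor(d-1)/2\rfloor+1$ good candidates lie pairwise within degree $\lambda$ of each other (since each differs from $a(x)$ by a polynomial of degree $\leq\lambda$), so such a $j^\star$ always exists; and since the good ones outnumber the at most $\lfloor(d-1)/2\rfloor$ bad candidates, at least one of the $\lfloor(d-1)/2\rfloor+1$ neighbours of $\hat{a}^{(j^\star)}(x)$ must itself be good. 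Using $\mbox{deg}(p+q)\leq\max(\mbox{deg}(p),\mbox{deg}(q))$ then gives $\mbox{deg}(\hat{a}(x)-a(x))\leq\lambda$, closing the argument.
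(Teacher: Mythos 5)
Your proposal is correct and follows essentially the same route as the paper's proof: the same pigeonhole count (at least $2\lfloor(d-1)/2\rfloor+1$ indices with $\tau_j>\lambda$, of which at least $\lfloor(d-1)/2\rfloor+1$ also carry at most a bounded error) reduces matters to Lemma~\ref{31}, Algorithm~\uppercase\expandafter{\romannumeral1} is run once per candidate reference, and the output is identified by clustering the candidate reconstructions according to pairwise degree-$\lambda$ closeness. The only (harmless) difference is in the selection rule: you pick a reconstruction with at least $\lfloor(d-1)/2\rfloor+1$ close neighbours and finish with $\deg(p+q)\leq\max(\deg p,\deg q)$, which lets you skip the paper's extra observation that a reference whose error has degree greater than $\lambda$ always produces a reconstruction differing from $a(x)$ in degree greater than $\lambda$.
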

\begin{proof}
Without loss of generality, we assume $\tau_1\geq\tau_2\geq\cdots\geq\tau_L$. First, by taking every residue in the first $2\lfloor(d-1)/2\rfloor+1$ residues as a reference and following \textit{Algorithm \uppercase\expandafter{\romannumeral1}}, we want to calculate the corresponding folding polynomial $\hat{k}_i(x)$, respectively. When $\tilde{a}_i(x)$ for $1\leq i\leq2\lfloor(d-1)/2\rfloor+1$ is known as an error-free residue or a residue with a bounded error and the bound is $\lambda$, since $\lambda<\tau_{(L-2\lfloor(d-1)/2\rfloor)}\leq\tau_i$ from (\ref{gi}), it follows from Lemma \ref{31} that $k_{i}(x)$ can be accurately determined by \textit{Algorithm \uppercase\expandafter{\romannumeral1}}, i.e., $\hat{k}_i(x)=k_i(x)$. Since there are at most $\lfloor(d-1)/2\rfloor$ residues with errors of degrees greater than $\lambda$ in the first $2\lfloor(d-1)/2\rfloor+1$ residues, there are at least $\lfloor(d-1)/2\rfloor+1$ error-free residues or residues with bounded errors. Therefore, at least $\lfloor(d-1)/2\rfloor+1$ folding polynomials out of $\hat{k}_i(x)$ for $1\leq i\leq2\lfloor(d-1)/2\rfloor+1$ are accurately determined. However, when $\tilde{a}_i(x)$ is a residue with an error of degree greater than $\lambda$ and taken as a reference, the corresponding folding polynomial $\hat{k}_i(x)$ may not be reconstructed in \textit{Algorithm \uppercase\expandafter{\romannumeral1}}, and even though $\hat{k}_i(x)$ is reconstructed, it may not be equal to $k_i(x)$.

Then, for each obtained $\hat{k}_i(x)$ with $1\leq i\leq 2\lfloor(d-1)/2\rfloor+1$, we reconstruct $a(x)$ as $\hat{a}^{[i]}(x)=\hat{k}_i(x)m_i(x)+\tilde{a}_i(x)$ for $1\leq i\leq 2\lfloor(d-1)/2\rfloor+1$. If $\tilde{a}_i(x)$ and $\tilde{a}_j(x)$ with $i\neq j$ are both error-free residues or residues with bounded errors, we have $\mbox{deg}\left(\hat{a}^{[i]}(x)-a(x)\right)\leq\lambda$ and $\mbox{deg}\left(\hat{a}^{[j]}(x)-a(x)\right)\leq\lambda$, since $\hat{k}_i(x)$ and $\hat{k}_j(x)$ are accurately determined. Thus, we have  $\mbox{deg}\left(\hat{a}^{[i]}(x)-\hat{a}^{[j]}(x)\right)\leq\lambda$. If $\tilde{a}_i(x)$ is a residue with an error of degree greater than $\lambda$ and $\hat{k}_i(x)$ is reconstructed, one can see that no matter whether $\hat{k}_i(x)$ is accurate or not, we will have $\mbox{deg}\left(\hat{a}^{[i]}(x)-a(x)\right)>\lambda$. This is due to the fact that $\hat{a}^{[i]}(x)-a(x)=(\hat{k}_i(x)-k_i(x))m_i(x)+\left(\tilde{a}_i(x)-a_i(x)\right)$ and $\mbox{deg}\left(m_i(x)\right)>\mbox{deg}\left(\tilde{a}_i(x)-a_i(x)\right)>\lambda$. Furthermore, we can easily obtain $\mbox{deg}\left(\hat{a}^{[i]}(x)-\hat{a}^{[j]}(x)\right)>\lambda$ when one of the corresponding references $\tilde{a}_i(x)$ and $\tilde{a}_j(x)$ used for reconstruction in \textit{Algorithm \uppercase\expandafter{\romannumeral1}} is a residue with an error of degree greater than $\lambda$, and the other is an error-free residue or a residue with a bound error.

Therefore, among the above at most $2\lfloor(d-1)/2\rfloor+1$ reconstructions $\hat{a}^{[i]}(x)$, we can find at least $\lfloor(d-1)/2\rfloor+1$ reconstructions such that $\mbox{deg}\left(\hat{a}^{[i]}(x)-\hat{a}^{[j]}(x)\right)\leq\lambda$ among pairs of $i,j$ with $1\leq i\neq j\leq 2\lfloor(d-1)/2\rfloor+1$. One can see that all of such reconstructions are in fact obtained when references $\tilde{a}_i(x)$ are error-free residues or residues with bounded errors, and thus, any one of such reconstructions can be thought of as a robust reconstruction of $a(x)$. At this point, we have completed the proof.
\end{proof}

According to the above proof of Theorem \ref{t2}, let us summarize the robust reconstruction algorithm for a given set of moduli $\{m_i(x)\}_{i=1}^{L}$, with which the polynomial remainder code has the code distance $d$. Assume that $\tau_1\geq\tau_2\geq\cdots\geq\tau_L$ and there are $t\leq\lfloor(d-1)/2\rfloor$ unrestricted errors and an arbitrary number of bounded errors with the error bound $\lambda$ given by (\ref{gi}) in the residues.
\begin{enumerate}
   \item For every $i$ with $1\leq i\leq 2\lfloor(d-1)/2\rfloor+1$, take $\tilde{a}_i(x)$ as a reference and follow \textit{Algorithm \uppercase\expandafter{\romannumeral1}}. We want to calculate the corresponding $\hat{k}_i(x)$ for $1\leq i\leq 2\lfloor(d-1)/2\rfloor+1$, respectively. Note that some $\hat{k}_i(x)$ may not be reconstructed.
   \item Reconstruct $a(x)$ as $\hat{a}^{[i]}(x)=\hat{k}_i(x)m_i(x)+\tilde{a}_i(x)$ for each obtained $\hat{k}_i(x)$.
   \item Among these obtained $\hat{a}^{[i]}(x)$, we can find at least $\lfloor(d-1)/2\rfloor+1$ reconstructions $\hat{a}^{[i_j]}(x)$ for $1\leq j\leq \mu$ with $\mu\geq\lfloor(d-1)/2\rfloor+1$ such that every pair of them satisfy $\mbox{deg}\left(\hat{a}^{[i_{{\varsigma}}]}(x)-\hat{a}^{[i_{{\varrho}}]}(x)\right)\leq\lambda$ for $1\leq \varsigma, \varrho\leq\mu$, $\varsigma\neq\varrho$. Then, any one of such $\hat{a}^{[i_j]}(x)$ for $1\leq j\leq \mu$ can be regarded as a robust reconstruction of $a(x)$, i.e., $\mbox{deg}\left(\hat{a}^{[i_j]}(x)-a(x)\right)\leq\lambda$.
 \end{enumerate}

\begin{remark}\label{houm}
There is a related paper dealing with robustly reconstructing an integer from erroneous remainders \cite{chessa}, but our paper investigating the robust reconstruction problems for polynomials differs from \cite{chessa} in several aspects as follows:
\begin{itemize}
  \item[$a)$] The problem of robust reconstruction for integers from erroneous residues was considered in \cite{chessa} and \cite{wenjie,xiao} with different approaches. In \cite{chessa}, a large integer is robustly reconstructed through constructing a new consistent residue vector from the erroneous residues. In \cite{wenjie,xiao}, however, all folding integers are first accurately determined, and then a robust reconstruction is provided as an average of all the reconstructions from all the determined folding integers. In this paper, an improved reconstruction algorithm for polynomials in \textit{Algorithm \uppercase\expandafter{\romannumeral1}} is proposed by combining the approach in \cite{xiao} with the error correction algorithm for polynomial remainder codes in \cite{ss}. While both of the approaches in \cite{chessa} and \cite{xiao} can be directly extended to robust reconstruction for polynomials in Section \ref{sec2}, the obtained maximal possible robustness bounds would be usually less than (\ref{ccc}) obtained in our proposed algorithm.
  \item[$b)$] In \cite{chessa}, a special class of residue number systems with non-pairwise coprime moduli was only considered, where moduli $m_i=\prod\limits_{j\in[1,L];j\neq i}d_{ij}$ for $1\leq i\leq L$, $d_{ij}=d_{ji}$, and $\{d_{ij}, \mbox{for }1\leq i\leq L; i<j\leq L\}$ are pairwise coprime and greater than $1$. According to Proposition \ref{p1} for integers, one can see that this residue code with these moduli $m_i$ for $1\leq i\leq L$ has code distance $2$, which is unable to correct any residue errors. So, in order to enable single errors to be corrected, the legitimate range of the code must be restricted to a suitable subrange of $[0,\mbox{lcm}(m_1,\cdots,m_L))$ in \cite{bm}, and in \cite{chessa}, robust reconstruction for the case of a combined occurrence of a single unrestricted error and an arbitrary number of small errors in the residues was considered also with the legitimate range being a suitable subrange of $[0,\mbox{lcm}(m_1,\cdots,m_L))$. Its approach is hard to deal with the case of multiple unrestricted errors combined with small errors in the residues due to a considerable decoding complexity. In this paper, however, we consider the robust reconstruction problem for polynomials from the perspective of polynomial remainder codes with non-pairwise coprime moduli. The range of the degree of $a(x)$ is fixed for a general set of moduli $\{m_i(x)\}_{i=1}^{L}$, i.e., $\mbox{deg}\left(a(x)\right)<\mbox{deg}\left(\mbox{lcm}\left(m_1(x),\cdots,m_L(x)\right)\right)$. Under the assumption that the code distance of the polynomial remainder code with moduli $m_i(x)$ for $1\leq i\leq L$ is $d\geq3$, the remainder error bound and/or the maximum possible number of unrestricted residue errors are obtained for the robustness to hold in the paper. Moreover, a well-established algorithm based on Theorem \ref{t2} is proposed to robustly reconstruct a polynomial when there are multiple unrestricted errors and an arbitrary number of bounded errors in the residues, where \textit{Algorithm \uppercase\expandafter{\romannumeral1}} needs to be implemented $2\lfloor(d-1)/2\rfloor+1$ times.
  \item[$c)$] Compared with \cite{chessa}, we omit to consider the case of erasures in the residues in this paper due to its triviality. As we know, when erasures occur in the residues, both the number and positions of erasures are known. Without loss of generality, assume that $\aleph$ erasures occur and the erased residues are $a_{L-\aleph+1}(x),\cdots,a_L(x)$. So, we just calculate the code distance of the polynomial remainder code with moduli $m_i(x)$ for $1\leq i\leq L-\aleph$ according to Proposition \ref{p1} and consider to robustly reconstruct $a(x)$ with $\mbox{deg}\left(a(x)\right)<\mbox{deg}\left(\mbox{lcm}\left(m_1(x),\cdots,m_{L-\aleph}(x)\right)\right)$ from those available $\tilde{a}_i(x)$ for $1\leq i\leq L-\aleph$ in Theorem \ref{cor2} and Theorem \ref{t2}.
\end{itemize}
\end{remark}

\begin{example}\label{exaa}
Let $L=5$ and the moduli be $m_1(x)=(x^3+1)(x^2-2)(x^3+4)$, $m_2(x)=(x^3+1)(x^3-1)(x^3+2)$, $m_3(x)=(x^3-1)(x^3+2)(x^3+4)$, $m_4(x)=(x^3+1)(x^3+2)(x^2-2)$, $m_5(x)=(x^3-1)(x^2-2)(x^3+4)$. Then, the lcm of all the moduli is $M(x)=(x^3+1)(x^3-1)(x^3+2)(x^2-2)(x^3+4)$. According to Proposition \ref{p1}, the code distance of the polynomial remainder code with the moduli is $d=3$. In addition, we can calculate $\tau_1=\tau_2=\tau_3=3, \tau_4=\tau_5=2$, and the error bound $\lambda<3$ in (\ref{gi}). Assume that there are one unrestricted error and an arbitrary number of bounded errors with degrees less than $3$ affecting the residue vector of a polynomial $a(x)$ with $\mbox{deg}\left(a(x)\right)<\mbox{deg}\left(M(x)\right)=14$.
 \begin{enumerate}
   \item For every $i$ with $1\leq i\leq 3$, take $\tilde{a}_i(x)$ as a reference and follow \textit{Algorithm \uppercase\expandafter{\romannumeral1}}. We want to calculate $\hat{k}_1(x),\hat{k}_2(x),\hat{k}_3(x)$, respectively. Note that some $\hat{k}_i(x)$ may not be reconstructed.
   \item Reconstruct $a(x)$ as $\hat{a}^{[i]}(x)=\hat{k}_i(x)m_i(x)+\tilde{a}_i(x)$ for each obtained $\hat{k}_i(x)$.
   \item Among  $\hat{a}^{[i]}(x)$ for $1\leq i\leq 3$, we can find at least two reconstructions $\hat{a}^{[i_1]}(x)$ and $\hat{a}^{[i_2]}(x)$ such that $\mbox{deg}\left(\hat{a}^{[i_1]}(x)-\hat{a}^{[i_2]}(x)\right)\leq\lambda<3$.
 \end{enumerate}
 Then, $a(x)$ is robustly reconstructed as $\hat{a}^{[i_1]}(x)$ or $\hat{a}^{[i_2]}(x)$.
\end{example}

\begin{remark}
The robust reconstruction in Section \ref{sec2} or Section \ref{sec4}, although, may not correct all the residue errors, if there is an additional error correction code on the top of it, it may be possible to correct all these residue errors, since only bounded errors are left in the reconstructions due to the robustness. As an example for the robust reconstruction in Theorem \ref{t2}, let the above $a(x)$ be encoded by a product code in polynomial residue number system, introduced in \cite{prodd}, i.e., given a polynomial $G(x)$, called the generator of the product code, a polynomial $a(x)$ with $\mbox{deg}(a(x))<\mbox{deg}(M(x))$ is legitimate in the product code of generator $G(x)$ if $a(x)\equiv0\mbox{ mod }G(x)$, else it is illegitimate. Using this product code on the top of robust reconstruction in Theorem \ref{t2}, $a(x)$ can be accurately determined as $a(x)=\hat{a}(x)-[\hat{a}(x)]_{G(x)}$ if $\mbox{deg}(G(x))>\lambda$, where $\lambda$ is the remainder error bound in (\ref{gi}) and $\hat{a}(x)$ is the robust reconstruction from Theorem \ref{t2}. This is due to the fact that $\mbox{deg}(a(x)-\hat{a}(x))\leq\lambda$ and thus $a(x)$ and $\hat{a}(x)$ have the same folding polynomial $(\hat{a}(x)-[\hat{a}(x)]_{G(x)})/G(x)$ with respect to the modulus $G(x)$. Furthermore, since $[a(x)]_{G(x)}=0$, i.e., $a(x)\equiv0\mbox{ mod }G(x)$, we have $a(x)=\frac{\hat{a}(x)-[\hat{a}(x)]_{G(x)}}{G(x)}\cdot G(x)=\hat{a}(x)-[\hat{a}(x)]_{G(x)}$.

While the above robust reconstruction has limitations in practice due to the type of bounded residue errors (i.e., only the last few coefficients of the polynomial residue are corrupted by errors), the theoretical result is new and may be interesting. In the next section, another motivation for us to study such bounded residue errors is shown for the improvements in uncorrected error probability and burst error correction in a data transmission. 
\end{remark}

\section{Correction of multiple unrestricted errors and multiple bounded errors in polynomial remainder codes}\label{sec3}

Let $m_1(x),m_2(x),\cdots,m_L(x)$ be $L$ non-pairwise coprime moduli, $M(x)$ be the lcm of the moduli, and $d$ be the code distance of the polynomial remainder code with moduli $m_i(x)$ for $1\leq i\leq L$. As one can see in the preceding section, a polynomial $a(x)$ satisfying $\mbox{deg}\left(a(x)\right)<\mbox{deg}\left(M(x)\right)$ can be robustly reconstructed when there are $t\leq\lfloor(d-1)/2\rfloor$ unrestricted errors and an arbitrary number of bounded errors with the remainder error bound $\lambda$ given by (\ref{gi}) in the residues $\tilde{a}_i(x)$ for $1\leq i\leq L$. Note that the reconstruction may not be accurate but robust and all the residues are allowed to have errors. Moreover, as stated in \cite{ss}, the polynomial remainder code with moduli $m_i(x)$ for $1\leq i\leq L$ can correct up to $\lfloor(d-1)/2\rfloor$ errors in the residues, i.e., $a(x)$ can be accurately reconstructed when there are only $t\leq\lfloor(d-1)/2\rfloor$ unrestricted errors in the residues, and any other residue is error-free. In this section, by making full use of the redundancy in moduli, we obtain a stronger residue error correction capability, that is, for the given set of moduli $m_i(x)$, in addition to $t\leq\lfloor(d-1)/2\rfloor$ unrestricted errors in the residues, some bounded residue errors can be corrected in the polynomial remainder code with moduli $m_i(x)$ and code distance $d$.

The remainder error bound $\eta(\theta)$ here, which depends on a variable $\theta$ for $1\leq \theta\leq L-2\lfloor(d-1)/2\rfloor$, is given by
\begin{equation}\label{eta}
\eta(\theta)<\tau_{(\theta)},
\end{equation}
where $\tau_{(\theta)}$ denotes the $\theta$-th smallest element in $\{\tau_1,\cdots,\tau_L\}$, and  $\tau_i=\min\limits_{j}\{\mbox{deg}\left(d_{ij}(x)\right),\mbox{ for }1\leq j\leq L,j\neq i\}$ for $1\leq i\leq L$. It is easy to see that the upper bound $\tau_{(\theta)}$ of the remainder error bound $\eta(\theta)$ increases as $\theta$ increases, i.e., the degrees of multiple bounded errors can be large as $\theta$ becomes large. Later, we will illustrate that the larger $\theta$ is, the smaller the number of correctable bounded errors is.

\begin{theorem}\label{theoremmm}
Let $m_i(x)$, $1\leq i\leq L$, be $L$ non-pairwise coprime polynomial moduli, $d$ denote the code distance of the polynomial remainder code with moduli $m_1(x),m_2(x),\cdots,m_L(x)$. Then, the polynomial remainder code can correct up to $\lfloor(d-1)/2\rfloor$ unrestricted errors and $\lfloor(L-\theta)/2\rfloor-\lfloor(d-1)/2\rfloor$ bounded errors as $\mbox{deg}\left(e_i(x)\right)\leq\eta(\theta)$ with the remainder error bound $\eta(\theta)$ given by (\ref{eta}) in the residues.
\end{theorem}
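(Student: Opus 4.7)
The strategy is to invoke Theorem \ref{t2} for a robust reconstruction $\hat{a}(x)$ and then to identify the residual error $r(x)=a(x)-\hat{a}(x)$ exactly by viewing the residue discrepancies as a codeword in a length-$L$ repetition-type sub-code. To set up Theorem \ref{t2}, first observe that $\theta\le L-2\lfloor(d-1)/2\rfloor$ gives $\tau_{(\theta)}\le\tau_{(L-2\lfloor(d-1)/2\rfloor)}$, so the hypothesis $\eta(\theta)<\tau_{(\theta)}$ implies $\eta(\theta)<\tau_{(L-2\lfloor(d-1)/2\rfloor)}$. Combined with at most $\lfloor(d-1)/2\rfloor$ unrestricted errors and the remaining errors being bounded, Theorem \ref{t2} yields $\hat{a}(x)$ with $\mbox{deg}(\hat{a}(x)-a(x))\le\eta(\theta)$.

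Next, set $r(x)=a(x)-\hat{a}(x)$; then $\mbox{deg}(r)\le\eta(\theta)<\mbox{deg}(m_i(x))$ for every $i$, so $[\hat{a}(x)]_{m_i(x)}=a_i(x)-r(x)$, and the residue discrepancies $\delta_i(x):=\tilde{a}_i(x)-[\hat{a}(x)]_{m_i(x)}$ satisfy $\delta_i(x)=r(x)+e_i(x)$. In particular, every error-free position satisfies $\delta_i(x)=r(x)$, while every erroneous position (whether bounded or unrestricted) satisfies $\delta_i(x)\ne r(x)$ since $e_i(x)\ne 0$. Because $\mbox{deg}(r)<\mbox{deg}(m_i)$ for every $i$, a low-degree polynomial such as $r$ corresponds to the codeword $(r,r,\dots,r)$ of a repetition-type sub-code embedded in the polynomial remainder code: any two distinct low-degree polynomials produce codewords differing in all $L$ coordinates, so this sub-code has minimum distance $L$.

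The total number of erroneous residues is at most $\lfloor(d-1)/2\rfloor+\bigl(\lfloor(L-\theta)/2\rfloor-\lfloor(d-1)/2\rfloor\bigr)=\lfloor(L-\theta)/2\rfloor$, which for $\theta\ge 1$ satisfies $L-\lfloor(L-\theta)/2\rfloor>\lfloor(L-\theta)/2\rfloor$. Hence $r(x)$ appears as $\delta_i(x)$ at strictly more indices than any competing polynomial, and $r(x)$ is uniquely recovered by majority voting on $\{\delta_1(x),\dots,\delta_L(x)\}$; the exact reconstruction is then $a(x)=\hat{a}(x)+r(x)$. The main obstacle is the combinatorial step that ties together $\theta$, the number of unrestricted errors, and the count of bounded errors: one must verify the strict inequality $L-\lfloor(L-\theta)/2\rfloor>\lfloor(L-\theta)/2\rfloor$, which explains why $\theta\ge 1$ is required and pins down the correctable bounded-error count at $\lfloor(L-\theta)/2\rfloor-\lfloor(d-1)/2\rfloor$; the tradeoff that larger $\theta$ allows larger error degrees $\eta(\theta)<\tau_{(\theta)}$ at the cost of fewer correctable bounded errors falls out of this inequality.
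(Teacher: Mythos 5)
Your argument is correct, but it is assembled quite differently from the paper's. The paper proves Theorem \ref{theoremmm} in a single stage: assuming $\tau_1\geq\cdots\geq\tau_L$, it runs \textit{Algorithm \uppercase\expandafter{\romannumeral1}} with each of the first $L-\theta+1$ residues as reference, notes via Lemma \ref{31} that an \emph{error-free} reference yields the exact value $\hat{a}^{[i]}(x)=a(x)$ (not merely a close one), counts that at most $\lfloor(L-\theta)/2\rfloor$ of those references carry any error at all, and takes the majority of the $L-\theta+1$ reconstructions. You instead build a two-stage decoder: first invoke Theorem \ref{t2} as a black box --- your observation that $\eta(\theta)<\tau_{(\theta)}\leq\tau_{(L-2\lfloor(d-1)/2\rfloor)}$ is exactly the right way to see its hypothesis is met --- to get $\hat{a}(x)$ within degree $\eta(\theta)$ of $a(x)$, and then recover the residual $r(x)=a(x)-\hat{a}(x)$ exactly by a plurality vote on the residue discrepancies $\delta_i(x)=r(x)+e_i(x)$. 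The vote works because $\deg(r(x))\leq\eta(\theta)<\deg\left(m_i(x)\right)$ for every $i$ (worth stating explicitly: $\tau_j\leq\tau_{ij}=\deg\left(d_{ij}(x)\right)\leq\deg\left(m_i(x)\right)$ for all $i,j$, so $\tau_{(\theta)}\leq\min_i\deg\left(m_i(x)\right)$), and because at most $\lfloor(L-\theta)/2\rfloor<L/2$ positions are erroneous, so $r(x)$ is the strict majority value among the $\delta_i(x)$. Both proofs ultimately rest on the same count of $\lfloor(L-\theta)/2\rfloor$ corrupted positions; what yours buys is modularity (Theorem \ref{t2} is reused verbatim, and \textit{Algorithm \uppercase\expandafter{\romannumeral1}} is executed only $2\lfloor(d-1)/2\rfloor+1$ times rather than $L-\theta+1\geq2\lfloor(d-1)/2\rfloor+1$ times, at the cost of one extra $O(L)$ residue-domain vote), whereas the paper's one-stage majority directly produces the decoding algorithm stated in its subsequent remark.
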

\begin{proof}
Without loss of generality, assume that $\tau_1\geq\tau_2\geq\cdots\geq\tau_L$. Similar to the proof of Theorem \ref{t2}, we take every residue in the first $L-\theta+1$ residues as a reference and want to calculate the corresponding folding polynomials by following \textit{Algorithm \uppercase\expandafter{\romannumeral1}}. After that, we reconstruct $a(x)$ as $\hat{a}^{[i]}(x)$ with each obtained folding polynomial as $\hat{a}^{[i]}(x)=\hat{k}_i(x)m_i(x)+\tilde{a}_i(x)$ for $1\leq i\leq L-\theta+1$. Since $\eta(\theta)<\tau_{(\theta)}\leq\tau_i$ for $1\leq i\leq L-\theta+1$, if a reference $\tilde{a}_i(x)$ is an error-free residue or a residue with a bounded error and the bound is $\eta(\theta)$, $\hat{k}_i(x)$ obtained from \textit{Algorithm \uppercase\expandafter{\romannumeral1}} is accurate according to Lemma \ref{31}. Since there are at most $\lfloor(d-1)/2\rfloor$ unrestricted errors and $\lfloor(L-\theta)/2\rfloor-\lfloor(d-1)/2\rfloor$ bounded errors in the residues, there are at most $\lfloor(L-\theta)/2\rfloor$ erroneous reconstructions in these $\hat{a}^{[i]}(x)$ for $1\leq i\leq L-\theta+1$, and the remaining reconstructions are correct and equal to each other. Therefore, we can find at least $\lceil(L-\theta)/2\rceil+1$ reconstructions $\hat{a}^{[i_j]}(x)$ for $1\leq j\leq\nu$ with $\nu\geq\lceil(L-\theta)/2\rceil+1$ such that
      \begin{equation}
      \hat{a}^{[i_1]}(x)=\hat{a}^{[i_2]}(x)=\cdots=\hat{a}^{[i_\nu]}(x).
      \end{equation}
Let $\hat{a}(x)$ be equal to the majority of all the $L-\theta+1$ reconstructions $\hat{a}^{[i]}(x)$ for $1\leq i\leq L-\theta+1$, i.e., $\hat{a}(x)=\hat{a}^{[i_j]}(x)$ for $1\leq j\leq\nu$. Then, $\hat{a}(x)$ is equal to the true $a(x)$, i.e., $\hat{a}(x)=a(x)$.
\end{proof}
\begin{remark}
From the above proof of Theorem \ref{theoremmm}, we now propose our new decoding algorithm for polynomial remainder codes with non-pairwise coprime moduli in the following. Without loss of generality, assume that $\tau_1\geq\tau_2\geq\cdots\geq\tau_L$.
\begin{enumerate}
  \item For every $i$ with $1\leq i\leq L-\theta+1$, take $\tilde{a}_i(x)$ as a reference and follow \textit{Algorithm \uppercase\expandafter{\romannumeral1}}. We want to calculate $\hat{k}_i(x)$ for $1\leq i\leq L-\theta+1$, respectively. Note that some $\hat{k}_i(x)$ may not be reconstructed.
  \item Reconstruct $a(x)$ as $\hat{a}^{[i]}(x)=\hat{k}_i(x)m_i(x)+\tilde{a}_i(x)$ for each obtained $\hat{k}_i(x)$. If the number of the obtained $\hat{k}_i(x)$ is less than $\lceil(L-\theta)/2\rceil+1$, the decoding algorithm fails.
  \item Among these reconstructions $\hat{a}^{[i]}(x)$ for $1\leq i\leq L-\theta+1$, if we can find at least $\lceil(L-\theta)/2\rceil+1$ reconstructions $\hat{a}^{[i_j]}(x)$ for $1\leq j\leq\nu$ with $\nu\geq\lceil(L-\theta)/2\rceil+1$ such that
      \begin{equation}
      \hat{a}^{[i_1]}(x)=\hat{a}^{[i_2]}(x)=\cdots=\hat{a}^{[i_\nu]}(x),
      \end{equation}
      let $\hat{a}(x)=\hat{a}^{[i_j]}(x)$ for $1\leq j\leq\nu$. Otherwise, the decoding algorithm fails.
\end{enumerate}

With the above decoding algorithm, if there are $\lfloor(d-1)/2\rfloor$ or fewer unrestricted errors and $\lfloor(L-\theta)/2\rfloor-\lfloor(d-1)/2\rfloor$ or fewer bounded errors with the remainder error bound $\eta(\theta)$ given by (\ref{eta}) in the residues, $a(x)$ can be accurately reconstructed from Theorem \ref{theoremmm}, i.e., $\hat{a}(x)=a(x)$. It is obviously seen that the price paid for the increased error correction capability is an increase in computational complexity. In the above decoding algorithm, \textit{Algorithm \uppercase\expandafter{\romannumeral1}} needs to be implemented $L-\theta+1$ times, i.e., the decoding algorithm in \cite{ss} (or in Section \ref{sec1} in this paper) used to reconstruct a folding polynomial in Step $4$ of \textit{Algorithm \uppercase\expandafter{\romannumeral1}} needs to be implemented $L-\theta+1$ times.
\end{remark}

\begin{example}\label{exa}
Let $m_1(x)=(x+1)(x+2)(x+3)(x+4)$, $m_2(x)=x(x+1)(x+3)(x+4)$, $m_3(x)=x(x+1)(x+2)(x+4)$, $m_4(x)=x(x+2)(x+3)(x+4)$, $m_5(x)=x(x+1)(x+2)(x+3)$ be $L=5$ moduli in $\mathrm{GF}(5)[x]$. We can easily obtain that the polynomial remainder code with the moduli has the code distance $d=4$, and $\tau_i=3$ for all $1\leq i\leq 5$. Therefore, from Theorem \ref{theoremmm} the polynomial remainder code can correct up to one unrestricted residue error and one bounded residue error with the remainder error bound $\eta(1)<3$. If one applies the result in \cite{ss}, only one unrestricted residue error can be corrected.
\end{example}

To see the improvements that are achieved in Theorem \ref{theoremmm}, we consider the application in a data transmission. In the residue number system, a number might be communicated from the sender to the receiver through the transmission of its residues. Instead of numbers, a method for transmitting information based on polynomials over a Galois field is used. To simplify the analysis, let a sequence be $a=(a[1],a[2],\cdots,a[k])$, where $a[i]\in \mathrm{GF}(p)$ for all $1\leq i\leq k$ and $p$ is a prime. Denote by $a(x)$ the corresponding polynomial $a(x)=\sum_{i=1}^{k}a[i]x^{i-1}$. Let moduli $m_1(x),m_2(x),\cdots,m_L(x)$ be $L$ polynomials in $\mathrm{GF}(p)[x]$ such that the degree of the lcm $M(x)$ of all the moduli is greater than $k-1$. If the degree of $m_i(x)$ is denoted by $m_i$ for each $1\leq i\leq L$, the corresponding residue $a_i(x)$ of $a(x)$ modulo $m_i(x)$ can be represented by $a_i(x)=\sum_{j=1}^{m_i}a_{ij}x^{j-1}$ for $a_{ij}\in\mathrm{GF}(p)$. In place of the original block $a$, the residue sequences $a_i=\left(a_{i1},a_{i2},\cdots,a_{im_i}\right)$ for $1\leq i\leq L$ are transmitted in the following order:
\begin{equation}\label{coe}
(a_{11},\cdots,a_{1m_1},a_{21},\cdots,a_{2m_2},\cdots,a_{L1},\cdots,a_{Lm_L}).
\end{equation}
If there is no error in the transmission, $a(x)$ can be accurately recovered using the CRT for polynomials in (\ref{crtpoly}), provided that $k$ and the moduli $m_i(x)$ are known. Then, $a$ is simply formed from the coefficients of $a(x)$. In practice, data can be corrupted during transmission. For a reliable communication, errors must be corrected. We herein consider two kinds of errors in the channel: random errors and burst errors.

Let the channel bit error probability be $\gamma$, the error probability of a residue $\tilde{a}_i$ be $p_{m_i}(\gamma)$, and the bounded residue error probability of $\tilde{a}_i$ be $q_{m_i}(\gamma;\theta)$, where $m_i$ is the length of the sequence presentation of moduli $m_i(x)$ over $\mathrm{GF}(p)$. Then,
\begin{equation}
p_{m_i}(\gamma)=1-(1-\gamma)^{m_i},
\end{equation}
\begin{equation}
q_{m_i}(\gamma;\theta)=(1-\gamma)^{m_i-\eta(\theta)}-(1-\gamma)^{m_i}.
\end{equation}
In what follows, let us consider the polynomial remainder code in Example \ref{exa}, where $d=4$ and $L=5$. We obtain two upper bounds for the uncorrected error probabilities in the decoding algorithm in Section \ref{sec1} and our proposed decoding algorithm, respectively.
\begin{itemize}
  \item According to Proposition \ref{p2}
  \begin{enumerate}
    \item The probability when all received residues are correct is
    \begin{equation}\label{eq1}
    p(c)=\prod_{i=1}^{L}(1-p_{m_i}(\gamma)).
    \end{equation}
    \item The probability when there is only one residue in error is
    \begin{equation}\label{eq2}
    p^\prime(c)=\sum_{i=1}^{L}p_{m_i}(\gamma)\cdot\underset{\begin{subarray}{ll}
j=1\\
j\neq i
\end{subarray}}{\mathop{\prod^{L}}}(1-p_{m_j}(\gamma)).
    \end{equation}
  \end{enumerate}
Then, from the decoding algorithm based on Proposition \ref{p2} in Section \ref{sec1}, we immediately obtain an upper bound for its uncorrected error probability as
  \begin{equation}
  P_{uncorrected}\leq 1-p(c)- p^\prime(c).
  \end{equation}

  \item According to Theorem \ref{theoremmm}
  \begin{enumerate}
    \item The probability when there are at most $\lfloor(L-\theta)/2\rfloor=\beta$ bounded errors in the residues is
     \begin{align}\label{eq3}
\begin{split}
\overline{p(c)}&=p(c)+\sum_{i=1}^{L}q_{m_i}(\gamma;\theta)\cdot\underset{\begin{subarray}{ll}
j=1\\
j\neq i
\end{subarray}}{\mathop{\prod^{L}}}(1-p_{m_j}(\gamma))\\
&\hspace{-1cm}+\sum_{i_1=1}^{L}\sum_{i_2>i_1}^{L}q_{m_{i_1}}(\gamma;\theta) q_{m_{i_2}}(\gamma;\theta)\cdot\underset{\begin{subarray}{c}
j=1\\
j\neq i_1,i_2
\end{subarray}}{\mathop{\prod^{L}}}(1-p_{m_j}(\gamma))\\
&\hspace{-1cm}+\cdots+\sum_{i_1=1}^{L}\sum_{i_2>i_1}^{L}\cdots\sum_{i_{\beta}>i_{\beta-1}}^{L}q_{m_{i_1}}(\gamma;\theta) \cdots q_{m_{i_\beta}}(\gamma;\theta)\\
&\hspace{-1cm}\cdot\underset{\begin{subarray}{c}
j=1\\
j\neq i_1,\cdots,i_\beta
\end{subarray}}{\mathop{\prod^{L}}}(1-p_{m_j}(\gamma)).
\end{split}					
\end{align}
    \item The probability when there are one error with degree greater than $\eta(\theta)$ and at most $\lfloor(L-\theta)/2\rfloor-1$ bounded errors with the remainder error bound $\eta(\theta)$ in the residues is
         \begin{align}\label{eq4}
\begin{split}
\hspace*{-0.28cm}\overline{p^\prime(c)}&=\sum_{i=1}^{L}\left(1-(1-\gamma)^{m_i-\eta(\theta)}\right)\cdot\left(\underset{\begin{subarray}{ll}
j=1\\
j\neq i
\end{subarray}}{\mathop{\prod^{L}}}(1-p_{m_j}(\gamma))\right.\\
&\left.\hspace{-1.3cm}+\underset{\begin{subarray}{ll}
i_1=1\\
i_1\neq i
\end{subarray}}{\mathop{\sum^{L}}}q_{m_{i_1}}(\gamma;\theta)\cdot\underset{\begin{subarray}{c}
j=1\\
j\neq i,i_1
\end{subarray}}{\mathop{\prod^{L}}}(1-p_{m_j}(\gamma))\right.\\
&\hspace{-1.3cm}\left.+\underset{\begin{subarray}{ll}
i_1=1\\
i_1\neq i
\end{subarray}}{\mathop{\sum^{L}}}\;\underset{\begin{subarray}{ll}
i_2>i_1\\
i_2\neq i
\end{subarray}}{\mathop{\sum^{L}}}q_{m_{i_1}}(\gamma;\theta)q_{m_{i_2}}(\gamma;\theta)\cdot\underset{\begin{subarray}{c}
j=1\\
j\neq i,i_1,i_2
\end{subarray}}{\mathop{\prod^{L}}}(1-p_{m_j}(\gamma))\right.\\
&\left.\hspace{-1.3cm}+\cdots+\underset{\begin{subarray}{ll}
i_1=1\\
i_1\neq i
\end{subarray}}{\mathop{\sum^{L}}}\;\underset{\begin{subarray}{ll}
i_2>i_1\\
i_2\neq i
\end{subarray}}{\mathop{\sum^{L}}}\cdots\underset{\begin{subarray}{c}
i_{\beta-1}>i_{\beta-2}\\
i_{\beta-1}\neq i
\end{subarray}}{\mathop{\sum^{L}}}q_{m_{i_1}}(\gamma;\theta)\cdots q_{m_{i_{\beta-1}}}(\gamma;\theta)\right.\\
&\left.\hspace{-1.3cm}\cdot\underset{\begin{subarray}{c}
j=1\\
j\neq i,i_1,\cdots,i_{\beta-1}
\end{subarray}}{\mathop{\prod^{L}}}(1-p_{m_j}(\gamma))\right).
\end{split}					
\end{align}
  \end{enumerate}
Then, from our decoding algorithm based on Theorem \ref{theoremmm}, we immediately obtain an upper bound for its uncorrected error probability as
 \begin{equation}
 \overline{P_{uncorrected}}\leq1-\overline{p(c)}- \overline{p^\prime(c)}.
 \end{equation}
\end{itemize}

It is obvious to see that $p(c)\leq\overline{p(c)}$ and $p^\prime(c)\leq\overline{p^\prime(c)}$. Therefore, we have $1-\overline{p(c)}- \overline{p^\prime(c)}\leq 1-p(c)- p^\prime(c)$. The performance of uncorrected error probabilities in Example \ref{exa} for the two decoding algorithms based on Proposition \ref{p2} and Theorem \ref{theoremmm} is shown in Fig. 1, where both simulations and the obtained upper bounds for uncorrected random errors are shown.

\begin{figure}[H]
  \hspace{-0.18in}
  \includegraphics[width=1.1\columnwidth,draft=false]{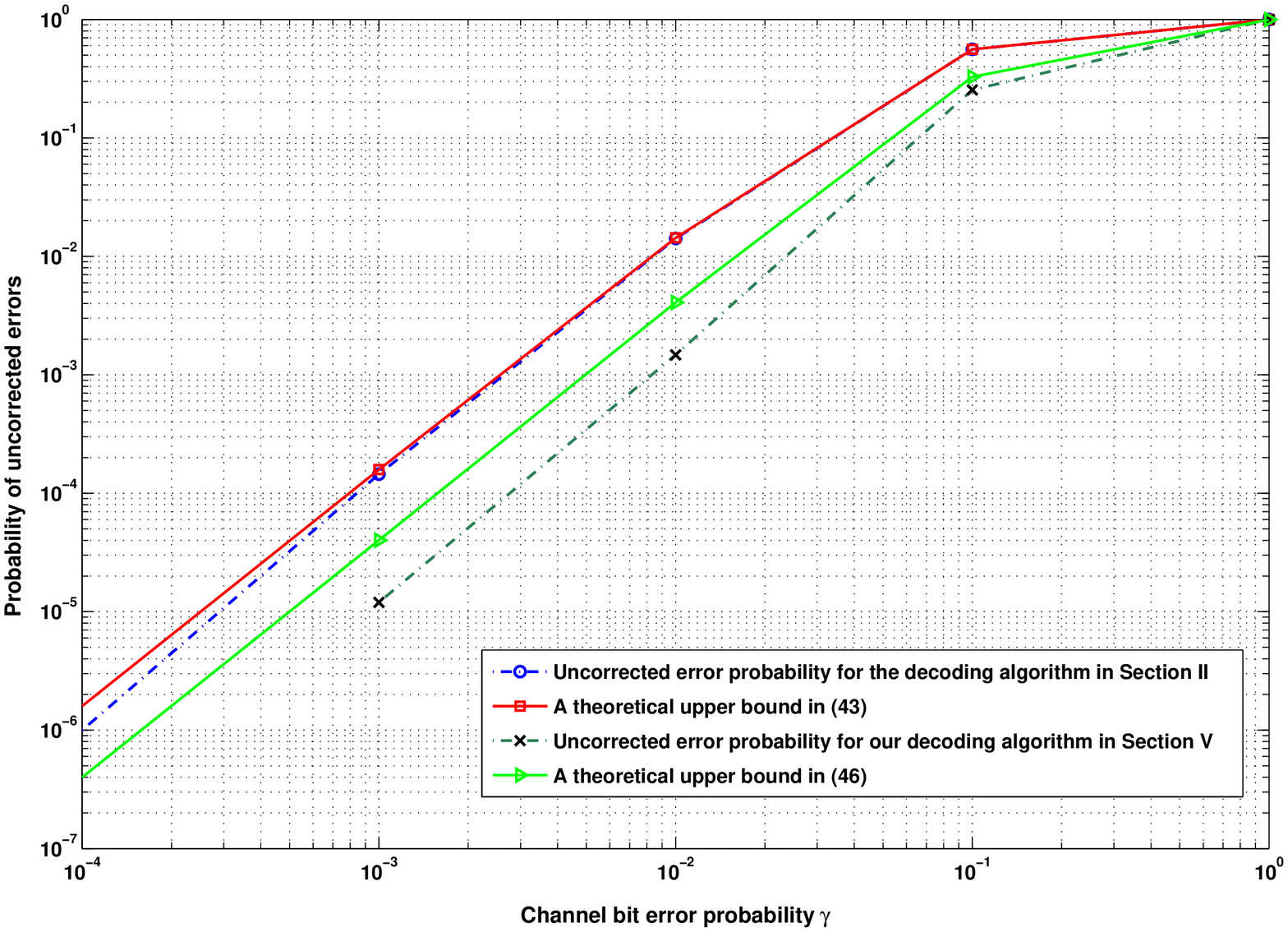}\\
  \vspace{-0.2cm}
  \begin{center}
Fig. 1. Uncorrected error probabilities based on Proposition \ref{p2} and Theorem \ref{theoremmm}: simulations and theoretical upper bounds.
\end{center}
\end{figure}

We next investigate the burst error correction capability in polynomial remainder codes with non-pairwise coprime moduli. As a residue $a_i(x)$ occupies $m_i$ bits, an error in this residue would affect up to $m_i$ bits. In order to express our question more precisely, we assume that all the moduli $m_1(x),m_2(x),\cdots,m_L(x)$ have the same degree $m$. Then, any error burst of width not more than $m+1$ in (\ref{coe}) can affect two residues at most. Similar to the result for polynomial remainder codes with pairwise coprime moduli in \cite{poly5,poly6}, it is directly obtained that the polynomial remainder code with non-pairwise coprime moduli and code distance $d$ that can correct up to $\lfloor(d-1)/2\rfloor$ errors in the residues can correct up to $\lfloor \lfloor(d-1)/2\rfloor/2\rfloor$ bursts of width not more than $m+1$, or correct one burst of width $(\lfloor(d-1)/2\rfloor-1)m+1$. By Theorem \ref{theoremmm}, however, we can further improve the capability of burst error correction in the polynomial remainder codes with non-pairwise coprime moduli. Let $A=\lfloor(d-1)/2\rfloor$ and $B=\lfloor(L-\theta)/2\rfloor-\lfloor(d-1)/2\rfloor$, and we have the following result.

\begin{figure}[H]
  \centering
  \includegraphics[width=1\columnwidth,draft=false]{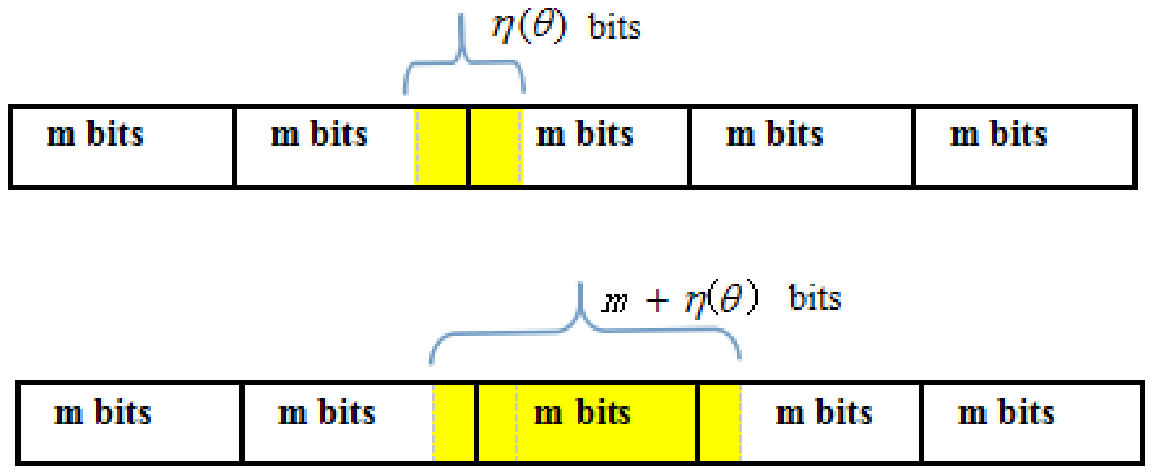}\\
   \vspace{-0.4cm}
  \begin{center}
Fig. 2. Burst error representations.
\end{center}
\end{figure}

\begin{corollary}\label{theorem8}
Let $m_i(x)$, $1\leq i\leq L$, be $L$ non-pairwise coprime polynomial moduli with the same degree $m$. Assume that the code distance of the polynomial remainder code with the moduli is $d$. Define by $M(x)$ with $\mbox{deg}\left(M(x)\right)>k-1$ the least common multiple of all the moduli, and let $\eta(\theta)$ be defined in (\ref{eta}). If a sequence $a=(a[1],a[2],\cdots,a[k])$ over $\mathrm{GF}(p)$ is encoded and sent by transmitting the coefficients of the residues of its corresponding polynomial $a(x)$ modulo $m_i(x)$ as in (\ref{coe}), then, based on residue error correction capability of polynomial remainder codes in Theorem \ref{theoremmm}, the following results are easily obtained:
\begin{itemize}
  \item[$1)$] Correction of bursts of width not more than $\eta(\theta)$
  \begin{enumerate}
    \item[$1.1)$] when $A\leq B$, it can correct up to $A$ such bursts;
    \item[$1.2)$] when $A>B$, it can correct up to $B+\lfloor(A-B)/2\rfloor$ such bursts.
  \end{enumerate}
  \item[$2)$] Correction of bursts of width not more than $m+\eta(\theta)$
  \begin{enumerate}
    \item[$2.1)$] when $\lfloor A/2\rfloor\leq B$, it can correct up to $\lfloor A/2\rfloor$ such bursts;
    \item[$2.2)$] when $\lfloor A/2\rfloor> B$, it can correct up to $B+\lfloor(A-2B)/3\rfloor$ such bursts.
  \end{enumerate}
\end{itemize}
\end{corollary}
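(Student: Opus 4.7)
The plan is to reduce each burst to a worst-case residue error pattern and then apply Theorem \ref{theoremmm}, exploiting the fact that any bounded error may also be treated as an unrestricted error, so that the two error budgets $A$ and $B$ can be traded against each other. Because the $L$ moduli are pairwise distinct monic polynomials of the common degree $m$, we have $\tau_{ij}=\mbox{deg}\left(\mbox{gcd}(m_i(x),m_j(x))\right)\leq m-1$, hence $\eta(\theta)<\tau_{(\theta)}\leq m-1<m$. This will be the key inequality used to bound how many residues a burst can touch.

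First I would analyse a single burst of width at most $\eta(\theta)$. Since $\eta(\theta)<m$, such a burst either lies inside one residue or straddles the boundary between two consecutive residues. In the straddling case it affects the high-degree end of residue $i$ (producing an error of degree $m-1$, hence unrestricted) and the low-degree end of residue $i+1$ (producing an error supported in degrees $0,\ldots,c-1$ with $c\leq\eta(\theta)$, hence a bounded error in the sense of Theorem \ref{theoremmm}). The single-residue case is never worse, so each such burst contributes at most one unrestricted plus one bounded residue error.

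For bursts of width at most $m+\eta(\theta)<2m$, the same geometric reasoning shows that at most three consecutive residues are touched. The worst configuration fills the top $a\geq 1$ positions of residue $i$, all of residue $i+1$, and the bottom $c\geq 1$ positions of residue $i+2$, with $a+c\leq\eta(\theta)$; residues $i$ and $i+1$ then yield unrestricted errors, while residue $i+2$ yields an error of degree at most $c-1\leq\eta(\theta)-1$, i.e., a bounded error. Configurations touching only one or two residues are no worse, so each such burst contributes at most two unrestricted plus one bounded residue error. The hard part of the argument will be exactly this worst-case count: one must verify both that a fourth residue cannot be reached (immediate from $m+\eta(\theta)<2m$) and that the three-residue pattern cannot force a third unrestricted error (immediate from $a+c\leq\eta(\theta)$).

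With the worst-case counts in hand, the final step is a short budget calculation. If $N$ bursts produce at most $\alpha N$ unrestricted and $\beta N$ bounded residue errors, then by reclassifying $z$ of the bounded errors as unrestricted and invoking Theorem \ref{theoremmm}, correction succeeds precisely when there exists $z\in[0,\beta N]$ with $\alpha N+z\leq A$ and $\beta N-z\leq B$. Eliminating $z$ gives the equivalent threshold
\[
N\leq\lfloor A/\alpha\rfloor \text{ if } \beta\lfloor A/\alpha\rfloor\leq B, \qquad N\leq\lfloor(A+B)/(\alpha+\beta)\rfloor \text{ otherwise.}
\]
Substituting $(\alpha,\beta)=(1,1)$ yields statement $1)$, namely $N\leq A$ when $A\leq B$ and $N\leq\lfloor(A+B)/2\rfloor=B+\lfloor(A-B)/2\rfloor$ otherwise; substituting $(\alpha,\beta)=(2,1)$ yields statement $2)$, namely $N\leq\lfloor A/2\rfloor$ when $\lfloor A/2\rfloor\leq B$ and $N\leq\lfloor(A+B)/3\rfloor=B+\lfloor(A-2B)/3\rfloor$ otherwise.
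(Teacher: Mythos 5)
Your proof is correct and takes essentially the same route as the paper: bound the worst-case residue error pattern per burst (one unrestricted plus one bounded error for width at most $\eta(\theta)$; two unrestricted plus one bounded for width at most $m+\eta(\theta)$) and then apportion the budgets $A$ and $B$ of Theorem \ref{theoremmm}, reclassifying bounded errors as unrestricted once the bounded budget is exhausted. The paper delegates the per-burst counting to Fig.~2 and states the budget arithmetic verbally, whereas you make both steps explicit (including the inequality $\eta(\theta)<m$ that limits how many residues a burst can touch), but the underlying argument is the same.
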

\begin{proof}
From Theorem \ref{theoremmm}, the polynomial remainder code with moduli $m_i(x)$ for $1\leq i\leq L$ can correct up to $A$ unrestricted errors and $B$ bounded errors with the remainder error bound $\eta(\theta)$ in the residues. Fig. 2 shows that an error burst of width not more than $\eta(\theta)$ in (\ref{coe}) can, at most, give rise to a residue with an error and a residue with a bounded error. So, $1.1)$ and $1.2)$ are easily obtained. Similarly, from Fig. 2, an error burst of width not more than $m+\eta(\theta)$ in (\ref{coe}) can, at most, give rise to two residues with errors and one residue with a bounded error. So, when $\lfloor A/2\rfloor\leq B$, it can only correct up to $\lfloor A/2\rfloor$ such bursts. When $\lfloor A/2\rfloor> B$, in addition to correcting $B$ bursts of width not more than $m+\eta(\theta)$, the remaining error correction capability can correct up to $\lfloor(A-2B)/3\rfloor$ such bursts more.
\end{proof}
\begin{remark}
In the previous result in \cite{ss}, a polynomial remainder code with code distance $d$ can correct up to $A=\lfloor(d-1)/2\rfloor$ residue errors. Based on this error correction capability, it can only correct up to $\lfloor A/2\rfloor$ bursts of width not more than $\eta(\theta)$, or correct up to $\lfloor A/3\rfloor$ bursts of width not more than $m+\eta(\theta)$ if $\eta(\theta)>1$, which are not as good as the above results.
\end{remark}

\section{Conclusion}\label{sec5}
In this paper, we studied polynomial remainder codes with non-pairwise coprime moduli. We first considered the robust reconstruction problem from erroneous residues, namely robust CRT for polynomial problem, where all residues are allowed to have errors, but all the errors have to be bounded. A sufficient condition for the robustness bound was obtained, and a reconstruction algorithm was also proposed in the paper. Then, by releasing the constraint that all residue errors are bounded, another robust reconstruction was proposed when multiple unrestricted errors and an arbitrary number of bounded errors have occurred in the residues. Finally, compared with the previous residue error correction result in polynomial remainder codes, interestingly, our proposed result shows that in addition to correcting the number of residue errors as in \cite{ss}, some bounded residue errors can be corrected as well. With our proposed result in residue error correction, better performances in uncorrected error probability and burst error correction in a data transmission can be achieved.

\section*{Acknowledgment}
The authors would like to thank the Associate Editor and reviewers for their constructive comments that led to the improvement of this paper.

\end{document}